\newtheorem{theorem}{Theorem}
\newtheorem{claim}{Claim}
\newtheorem{proof}{Proof}
\newenvironment{icompact}{
	\begin{list}{$\bullet$}{
			\parsep 0.5pt plus 0.5pt
			\partopsep 0.5pt plus 0.5pt
			\topsep 0.5pt plus 1pt minus 0.5pt
			\itemsep 0.5pt plus 0.5pt
			\parskip 0pt plus 1pt
			\leftmargin 0.15in}
	}
	{\normalsize\end{list}}
\newcommand{\sys}{{\text{dMAPAR-HMM}}\xspace}
\newlength\heightfiga\newlength\heightcapa
\newlength\heightfigb\newlength\heightcapb
\newlength\heightfigc\newlength\heightcapc
\newlength\heightfig
\newcommand*{\affaddr}[1]{#1} % No op here. Customize it for different styles.
\begin{document}
\title{\sys: Reforming Traffic Model for Improving Performance Bound with Stochastic Network Calculus}

\author{%
	Qingqing Yang, Xi Peng, Huiwen Yang, Gong Zhang, and Bo Bai\\
	\affaddr{Theory Lab, Central Research Institute, 2012 Labs, Huawei Technologies Co. Ltd. }\\
	\affaddr{Hong Kong SAR, China}\\
%	\email{\{yang.qingqing1,pancy.pengxi,yanghuiwen4,nicholas.zhang,baibo8\}@hauwei.com}\\
%	\affaddr{\LaTeX\ University}%
}

\IEEEoverridecommandlockouts
\IEEEpubid{\makebox[\columnwidth]{978-3-903176-58-4~\copyright 2023 IFIP \hfill} \hspace{\columnsep}\makebox[\columnwidth]{ }}
\maketitle
 
\begin{abstract}
A popular branch of stochastic network calculus (SNC) utilizes moment-generating functions (MGFs) to characterize arrivals and services, which enables end-to-end performance analysis. However, existing traffic models for SNC cannot effectively represent the complicated nature of real-world network traffic such as dramatic burstiness. To conquer this challenge, we propose an adaptive spatial-temporal traffic model: \sys.   Specifically, we model the temporal on-off switching process as a dual Markovian arrival process (dMAP)  and   the arrivals during the on phases as an autoregressive hidden Markov model (AR-HMM).  The \sys model fits in with the MGF-SNC analysis framework, unifies various state-of-the-art arrival models, and matches real-world data more closely.
  We perform extensive experiments with real-world traces under different network topologies and utilization levels. Experimental results show that \sys significantly outperforms prevailing models in MGF-SNC. 
\end{abstract}
\begin{IEEEkeywords}
	Network Traffic Modeling, Stochastic Network Calculus, Performance Evaluation, Performance Bound
\end{IEEEkeywords}

\section{Introduction}\label{sec:intro}

Performance analysis, especially the evaluation of performance bound, is essential for supporting sufficient quality of service (QoS) for delay-sensitive network applications. Network calculus (NC) provides an ascendant methodology focusing on computing the performance bounds, i.e., the delay and backlog bounds. There exist two branches in NC: deterministic network calculus (DNC)~\cite{61109,61110,1134304} and stochastic network calculus (SNC)~\cite{Chang00,1638528,Fidler06,Jiang08,2342426,POLOCZEK201456,NIKOLAUS2019188,3388848,8264856}.    
DNC relies on deterministic models to \textcolor{black}{analyze} the strict worst-case performance, which usually results in quite low network utilization. However, most commercial network services allow statistical multiplexing of traffic flows to enhance network utilization, and prefer  probabilistic QoS metrics. Then SNC has been proposed by considering statistical behaviors to calculate the probabilistic performance bounds at a known small violation probability. Figure~\ref{fig:plane} shows the framework of MGF-based SNC. Network traffic characteristics are extracted by the feature extraction   module, and then are used by the performance evaluation module based on the MGF-SNC framework. The results of performance evaluation will be fed back to the center controller for further planning and optimization.
%Among the variants of SNC, moment generating function based SNC (MGF-SNC) has gain considerable attention in both academia and industry, since it enables concise modeling of not only a single-node, but also tandem networks~\cite{Fidler06}. 

%In SNC, arrivals and services are characterized either by envelope functions/tail bounds~\cite{Cruz96,1638528,Jiang08}, or by moment-generating function (MGFs) bounds~\cite{Chang2000, Fidler06}. 
%  While a larger class of processes can be modeled with tail bounds,~\cite{5931352} comes to the conclusion that using MGFs lead to tighter bounds under the assumption of independence.  
%For studying stochastic service guarantee, a traffic model is needed. It is the key to understanding the network flows. A useful first step towards understanding the characteristics of network flows

To investigate  traffic flows, the analysis of inter-arrival times (IATs) and packet sizes is the basic step. If they are found to follow certain probability distributions, the corresponding stochastic model of these flows can be determined. In MGF-SNC, packet sizes (or arrivals) and IATs are commonly assumed to conform to conventional distributions, including the Normal, Exponential,  et.al. However, our analysis and extensive past attempts~\cite{205464,1354569, LI20082584, LI2020123982,9500818,1064240} reveal that none of the conventional distributions are accurate enough for real network flows.

 \begin{figure}[tp]
 	\vspace{1em}
 	\centering
 	\includegraphics[width=0.48\textwidth]{./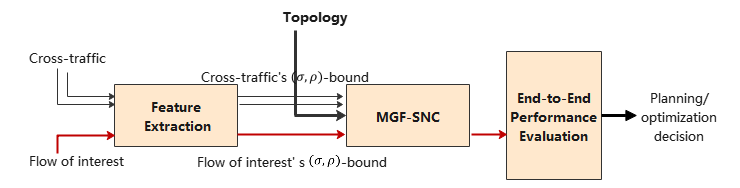}
 	\caption{\label{fig:plane} The workflow of  end-to-end  performance evaluation in MGF-SNC.}
 	\vspace{-1em}
 \end{figure} 
As early as the 1970's, ~\cite{1455418} reported a noticeable behavior of traffic, which is called ``burstiness" defined by peak to average transmission rate. It implies that an on-off switching pattern widely exists in real traffic flows.
%there would be no packets transmitted for a while, then flurry of transmission, no transmission for another long time, and so forth if one observes traffic over a long period of time.      
The early literature quantitatively describes   real flows from a fractal perspective~\cite{205464,Li09,380206,392383} and has been subsequently applied by~\cite{LI2020123982, Li21, LI2021126138,LI20082584}.  The fractal models convincingly reveal traffic characteristics, such as self-similarity (SS) and long-range dependence (LRD). An important issue with them is that their MGFs grow   super-linearly with time and hence the underlying  SNC results for MGFs are not directly applicable~\cite{6145483}.    
 
To tackle the challenge in traffic modeling, we introduce a novel spatial-temporal model: \sys, as illustrated in Figure~\ref{dmapar-hmm}. Inspired by  the definition of modulation  in telecommunications, we regard the arrival process as a transmission signal that can be demodulated into a carrier signal and an input signal of positive impulses.  Specifically, the carrier signal is modeled by a dual Markovian arrival process (dMAP) to depict the (temporal) on-off switching process observed in real traffic traces, and the input signal is modeled by an autoregressive hidden Markov model (AR-HMM) for the representation of the (spatial) data amount during \textit{on} phases.  
We show that various   state-of-the-art (SOTA) arrival models used in SNC can be readily represented by the proposed \sys model.  With traffic traces from real networks, we demonstrate that \sys can accurately depict multi-dimensional multi-order traffic characteristics.    
 
For providing performance analysis, we develop the  $(\sigma(\theta), \rho(\theta))$-envelope  of the MGF of   \sys, which can fit in with the framework of MGF-SNC. Though here we focus on deriving the delay bound with \sys, the results can be extended without much effort to the backlog bound. Besides the $(\sigma(\theta), \rho(\theta))$-envelope of   MGF, more general envelopes, such as stochastically bounded burstiness envelope, could also be considered similarly~\cite{6868978}. By extensive experiments under   different network  topologies and utilization levels, we show that \sys can significantly boost the effectiveness of MGF-SNC, which means that both the tightness and the reliability of the delay bound are enhanced.

The rest of the paper is organized as follows. In Section~\ref{sec:design}, the spatial-temporal model \sys is introduced.  In Section~\ref{sec:mgf_snc}, we present the interface of the proposed traffic model to MGF-SNC for network traffic. In Section~\ref{sec:app}, the feasibility of \sys is examined with real-world traces. 
%Section~\ref{sec:relatedwork} relates our contributions to the literature and 
Section~\ref{sec:conclucion} concludes the article.

\section{Network Traffic}\label{sec:design}
 In this section, we introduce the proposed traffic model \sys in details by showing how it captures the temporal and spatial dynamics of real-world traffic. We also reveal that many famous models are correlated to  \sys,  and show that our model favors the retention of critical traffic features. %Furthermore, we derive the MGF-bound of the \sys model, which provides an interface to MGF-SNC methodology. 

\subsection{Traffic Modeling}
We divide the time into discrete intervals to conform to the SNC approach.  We use $A(s,t)$ for the amount of data traffic arriving in the interval $(s,t]$, i.e., 
$$
A(s,t)=\sum_{k=s+1}^ta_k,
$$
where we adopt the convention $A(t,t)=0$. For ease of expression, we use $A(t)$ to represent $A(0,t)$. In addition, we use $a_k$ to represent data traffic arriving during the $k$-th timeslot, that is,  $a_k=A(k-1,k)=A(k)-A(k-1)$.

In the real world, traffic flows exhibit on-off switching patterns. Moreover, arrivals of all timeslots  are difficult to be entirely modeled as a single stochastic process. Therefore, we are motivated to parse the traffic flow and build an appropriate model for each component.
%and therefore find that the arrivals of on phases show clear distribution regularity. 
In this paper, we present a discrete-time spatial-temporal model \sys for traffic flows delivered in real networks. As shown in Figure~\ref{dmapar-hmm}, a traffic flow is discretized into a time series $\{a_t\}, t=1,2,\cdots$, based on a constant timeslot interval $\Delta t$. The time series $\{a_t\}$ can be considered to be formed by modulating an input signal of positive pulses by a carrier signal with an on-off switching scheme and unit amplitude. The input signal, i.e., the spatial model, $\{y_1, y_2,\cdots\}$,  contains the amount of non-zero arrivals along $\{a_t\}$.  The carrier signal, i.e., the temporal model, represents the temporal feature by recording the  IATs  of both on and off phases, i.e., $\{\tau_1^{ \rm off}, \tau_2^{ \rm off},\cdots\}$ and $\{\tau_1^{ \rm  on}, \tau_2^{ \rm  on},\cdots\}$, which are supplementary to each other. We will subsequently discuss the temporal and spatial models in details. Table~\ref{tab:symbol} lists the frequently used symbols in this paper.  

\begin{table}
	\centering
	\begin{tabular}{|c|p{6cm}|}
		\hline
		\textbf{Symbol} & \multicolumn{1}{c|}{\textbf{Meaning}}\\ \hline
		$\mathcal{S}^{\rm on}$ & state space of ${\rm MAP}^{\rm on}$    \\
		$\mathcal{S}^{\rm off}$ & state space of ${\rm MAP}^{\rm off}$    \\
		$\mathcal{N} $ & state space of $X$    \\
		$\mathcal{S} $ & state space of $Z$    \\
		$Q$ & transition probability matrix   of the carrier signal in the embedded state space\\
		$P$ & transition probability matrix   of X\\
		$T$ & transition probability matrix   of Z\\
		$X$ & hidden state of the input signal \\
		$Z$ & hidden state of the \sys signal \\
		$\Delta t$ & timeslot interval for discretization\\
		$\circ$ & Hadamard product\\
		$\otimes$ & Kronecker product\\
		$I_n$ & Identity matrix of size $n$\\
		
		\hline
	\end{tabular}
	\caption{\label{tab:symbol} Frequently used symbols in this paper.}
%	\vspace{3pt}
\end{table}
 
\begin{figure}[]
	\centering
	\includegraphics[width=0.41\textwidth]{./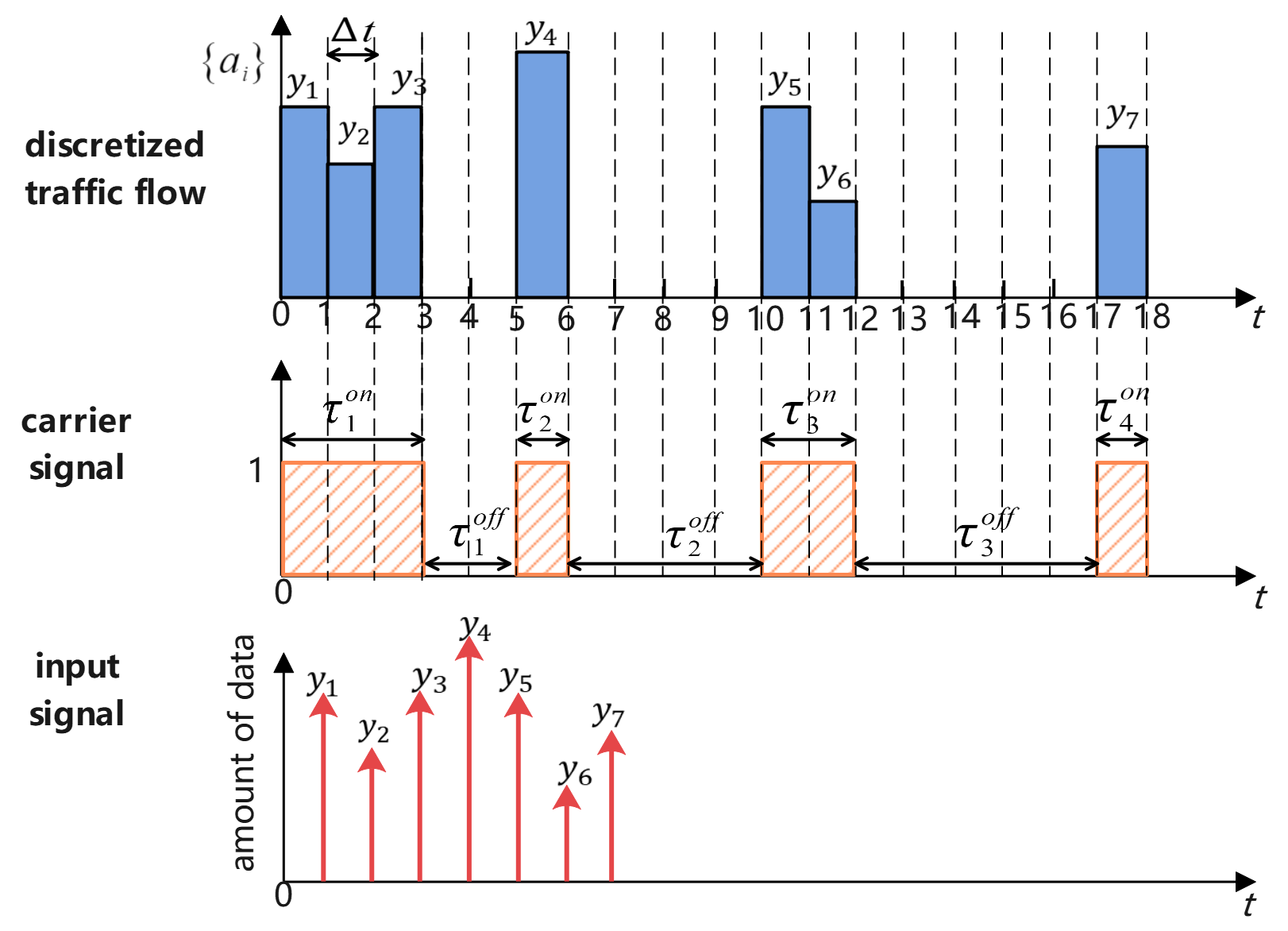}
	\caption{\bf \label{dmapar-hmm}\sys. { \rm Mapping the discrete-time observations into a spatial-temporal model, regarding the temporal on-off switching process as the carrier signal, and the data transmitted  in on phases as the input signal.
%			 Specifically, we fit the sequences $\{\tau_1^{ \rm off}, \tau_2^{ \rm off},\cdots\}$, $\{\tau_1^{ \rm  on}, \tau_2^{ \rm  on},\cdots\}$ and  $\{y_1, y_2,\cdots\}$ with ${\rm MAP}^{\rm off}(m_1)$, ${\rm MAP}^{\rm on}(m_2)$ and AR($p$)-HMM($N$), respectively.
}}
\end{figure}

 \begin{table*}
	\centering
	\begin{tabular}{cp{13cm}}
		\toprule
		\textbf{Case} & \multicolumn{1}{c}{\textbf{1-step transition probability}}\\ \midrule
		\begin{minipage}{0.02\textwidth}
			\includegraphics[width=2.5mm, height=4.5mm]{./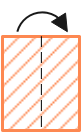}
		\end{minipage} & $(1,i,j)\curvearrowright(1,k,j)$: ${\rm MAP}^{\rm off}$ is frozen in state $j$, and ${\rm MAP}^{\rm on}$ makes a hidden transition from state $i$ to $k$. \\
		\begin{minipage}{0.02\textwidth}
			\includegraphics[width=2.5mm, height=4.5mm]{./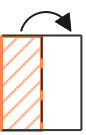}
		\end{minipage}  &$(1,i,j)\curvearrowright(0,i,k)$: ${\rm MAP}^{\rm on}$ is frozen in state $i$, and ${\rm MAP}^{\rm off}$ makes an observable transition from state $j$ to $k$.  \\ 
		\begin{minipage}{0.02\textwidth}
			\includegraphics[width=2.5mm, height=4.5mm]{./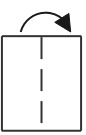}
		\end{minipage} & $(0,i,j)\curvearrowright(0,i,k)$: ${\rm MAP}^{\rm on}$ is frozen in state $i$, and ${\rm MAP}^{\rm off}$ makes a hidden transition from state $j$ to $k$.\\
		\begin{minipage}{0.02\textwidth}
			\includegraphics[width=2.5mm, height=4.5mm]{./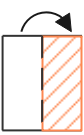}
		\end{minipage} & $(0,i,j)\curvearrowright(1,k,j)$: ${\rm MAP}^{\rm off}$ is frozen in state $j$, and ${\rm MAP}^{\rm on}$ makes an observable transition from state $i$ to $k$.\\ \bottomrule
	\end{tabular}
	\vspace{3pt}
	\caption{\label{tab:dmapinfo} State transition of the carrier signal. }
	
\end{table*}

 \begin{figure*}
 	\centering
 	\resizebox{\linewidth}{!}{
 		\begin{minipage}{22cm}
 			\begin{align}\label{eq: example Q}
 				Q=\scriptsize{
 					\begin{blockarray}{ccccc|cccc}
 						& (0,0,0) & (0,0,1) & (0,1,0)& (0,1,1) & (1,0,0) & (1,0,1) & (1,1,0)&(1,1,1)\\ 
 						\begin{block}{c[cccc|cccc]}				
 							(0,0,0) & 1-\nu_0\Delta t &\nu_{01}\Delta t & & & \frac{\nu_{02}^\prime (\nu_{02}+\nu_{03})\Delta t}{\nu^\prime_{02}+\nu_{03}^\prime}  & & \frac{\nu_{03}^\prime(\nu_{02}+\nu_{03})\Delta t}{\nu^\prime_{02}+\nu_{03}^\prime} &\\
 							(0,0,1) & \nu_{10}\Delta t  &1-\nu_1\Delta t &   & &    & \frac{\nu_{02}^\prime(\nu_{12}+\nu_{13})\Delta t}{\nu^\prime_{02}+\nu_{03}^\prime}  &   & \frac{\nu_{03}^\prime(\nu_{12}+\nu_{13})\Delta t}{\nu^\prime_{02}+\nu_{03}^\prime} \\
 							(0,1,0) &  &   & 1-\nu_0\Delta t  &\nu_{01}\Delta t & \frac{\nu_{12}^\prime(\nu_{02}+\nu_{03})\Delta t}{\nu^\prime_{12}+\nu_{13}^\prime}   &   & \frac{\nu_{13}^\prime(\nu_{02}+\nu_{03})\Delta t }{\nu^\prime_{12}+\nu_{13}^\prime}  &  \\ 
 							(0,1,1) & & & \nu_{10}\Delta t & 1-\nu_1\Delta t & &\frac{\nu_{12}^\prime(\nu_{12}+\nu_{13})\Delta t}{\nu^\prime_{12}+\nu_{13}^\prime}  & & \frac{\nu_{13}^\prime(\nu_{12}+\nu_{13})\Delta t}{\nu^\prime_{12}+\nu_{13}^\prime} \\
 							\cline{1-9} 
 							(1,0,0) & \frac{\nu_{02}(\nu^\prime_{02}+\nu^\prime_{03})\Delta t}{\nu_{02}+\nu_{03}}  &  \frac{\nu_{03} (\nu^\prime_{02}+\nu^\prime_{03})\Delta t}{\nu_{02}+\nu_{03}}   & & & 1-\nu^\prime_0\Delta t & &\nu^\prime_{01}\Delta t  & \\
 							(1,0,1) & \frac{\nu_{12}(\nu^\prime_{02}+\nu^\prime_{03})\Delta t}{\nu_{12}+\nu_{13}}  &  \frac{\nu_{13}(\nu^\prime_{02}+\nu^\prime_{03})\Delta t}{\nu_{12}+\nu_{13}}   & & & &1-\nu_0^\prime\Delta t &  &  \nu_{01}^\prime\Delta t \\
 							(1,1,0) & & &\frac{\nu_{02}(\nu^\prime_{12}+\nu^\prime_{13})\Delta t}{\nu_{02}+\nu_{03}}  &  \frac{\nu_{03}(\nu^\prime_{12}+\nu^\prime_{13})\Delta t }{\nu_{02}+\nu_{03}}   &   \nu_{10}^\prime\Delta t &   & 1-\nu_1^\prime\Delta t  &  \\ 
 							(1,1,1) & & &\frac{\nu_{12} (\nu^\prime_{12}+\nu^\prime_{13})\Delta t}{\nu_{12}+\nu_{13}}   &  \frac{\nu_{13} (\nu^\prime_{12}+\nu^\prime_{13})\Delta t}{\nu_{12}+\nu_{13}}  &  &\nu_{10}^\prime\Delta t & & 1-\nu_1^\prime\Delta t \\
 						\end{block}
 					\end{blockarray}
 				}
 			\end{align}
 		\end{minipage}
 	}
 \end{figure*}
 
\subsubsection{Temporal Model: dMAP}
 
 In general, the temporal feature of a flow  is quite challenging to characterize, and we propose a novel dMAP to enable an effective modeling methodology. The dMAP model consists of two independent continuous-time (or discrete-time) MAPs that randomly switch between each other. The two MAPs stand for the inter-arrival processes of the off and on phases, respectively. We choose MAPs for temporal modeling for two reasons. First,   MAP is regarded as one of the most expressive models for inter-arrival processes. Second, it is analytically tractable by using matrix-analytic methods. The dMAP model further enhances the representation capacity of the conventional MAP, and allows parallel computing for parameter estimation of two MAPs. 
 In our case, the carrier signal is controlled by  dMAP, where $\{\tau_1^{ \rm off}, \tau_2^{ \rm off},\cdots\}$ and $\{\tau_1^{ \rm  on}, \tau_2^{ \rm  on},\cdots\}$ are modeled by two independent MAPs -- ${\rm MAP}^{\rm off}(m_1)$ and ${\rm MAP}^{\rm on}(m_2)$ -- with a switching scheme.

 Let us denote MAP($m$) for an $m$-dimensional MAP which is governed by an underlying continuous-time Markov chain (CTMC) with state space $\mathcal{S}=\{0,\cdots, m-1\}, m\ge1$. Let $\mathbf{C}_0$ and $\mathbf{C}_1$ denote the matrices of state transition rates with zero and one arrival, respectively.     The following restrictions apply to the $\mathbf{C}_i$:
 $$
 \begin{array}{llll}
 	0&\le&[\mathbf{C}_1]_{ij} &<\infty\\
 		0&\le&[\mathbf{C}_0]_{ij} &<\infty\quad i\ne j\\
 		&&[\mathbf{C}_0]_{ii}&<0\\
 		\multicolumn{3}{c}{(\mathbf{C}_0+\mathbf{C}_1)\bm{1}}&=\bm{0},
 \end{array}
 $$
where $\bm{1}$ denotes an all-one column vector with an appropriate dimension.  In total, there are $2m^2-m$ free parameters.  The model parameters can be estimated using the   algorithms listed   in~\cite{9500818,10.1007}.
 To accommodate to the SNC framework,  we  shall discretize    ${\rm MAP}^{\rm off}(m_1)$ and ${\rm MAP}^{\rm on}(m_2)$       with      $\Delta t$. The timeslot interval  $\Delta t$ should be   small enough so that the probability of having two or more events occurring within $\Delta t$ can be neglected. Consider a continuous-time MAP(2) with the matrix representation
 $$
 \small{ 
 \mathbf{C}_0=\begin{bmatrix}
 	-\nu_0&\nu_{01}\\
 	\nu_{10}&-\nu_1
 \end{bmatrix}\quad {\rm and }\quad \mathbf{C}_1=\begin{bmatrix}
 	\nu_{02}&\nu_{03}\\
 	\nu_{12}&\nu_{13}
 \end{bmatrix},
}
 $$
  where $\nu_i$ is the sum rate at which the process makes a transition from state $i$, and $\nu_{ij}$ is the rate at which  the process makes a transition from state $i$ into state $j$. 
 The transition probability matrices of the corresponding discrete-time MAP are 
 $$
 \small{
 \mathbf{D}_0=\begin{bmatrix}
 	1-\nu_0\Delta t&\nu_{01}\Delta t\\
 	\nu_{10}\Delta t&1-\nu_1\Delta t
 \end{bmatrix}\;{\rm and }\;\;
 \mathbf{D}_1=\begin{bmatrix}
 	\nu_{02}\Delta t&\nu_{03}\Delta t\\
 	\nu_{12}\Delta t&\nu_{13}\Delta t
 \end{bmatrix},
}
 $$
and the row sum of $\mathbf{D}_0 + \mathbf{D}_1$ is always one.
For space considerations,   we  will  use  a set of model parameters    $\mathbf{D}=\{(\nu_i,\nu_{i,0},\cdots,\nu_{i,i-1}, \nu_{i, i+1}, \cdots, \nu_{i,2m-1})_{i=0,\cdots, m-1};    \Delta t\}$  to represent the discrete-time MAP.
 
\begin{table*}
	\centering
	\small{
		\begin{tabular}{l|r}
			\toprule
			\textbf{Model}  & \multicolumn{1}{|c}{\textbf{Architecture}}  \\   \hline
			MAP~\cite{9500818,10.1007,CASALE201061}& \mbox{disable ${\rm MAP}^{\rm on}$ +  AR-HMM($p=0, N=1, \mu=1,  \sigma= 0$)} \\
			BMAP~\cite{KLEMM2003149}& \mbox{disable ${\rm MAP}^{\rm on}$ +  AR-HMM($p=0, N=m_1\times\{\mbox{maximal batch size}\},   \sigma= 0$)}  \\
			Poisson~\cite{3388848, NIKOLAUS2019188}&\mbox{disable ${\rm MAP}^{\rm on}$ + ${\rm MAP}^{\rm off}$($m_1=1$) + AR-HMM($p=0, N=1, \mu=1, \sigma= 0$)}\\
			cPoisson & \mbox{disable ${\rm MAP}^{\rm on}$ + ${\rm MAP}^{\rm off}$($m_1=1$) + AR-HMM($p=0, N=1$)}\\
			MMOO~\cite{3388848, NIKOLAUS2019188}& \mbox{ ${\rm MAP}^{\rm off}$($m_1=1$) + ${\rm MAP}^{\rm on}$($m_2=1$) + AR-HMM($p=0, N=1,   \sigma=0$)}\\
			MMP~\cite{POLOCZEK201456} & \mbox{disable ${\rm MAP}^{\rm on}$ \& ${\rm MAP}^{\rm off}$ + AR-HMM($p=0$)}\\
			AR(p)~\cite{POLOCZEK201456} &\mbox{disable ${\rm MAP}^{\rm on}$ \& ${\rm MAP}^{\rm off}$ + AR-HMM($N=1$, residual=``normal")}\\
			Normal~\cite{POLOCZEK201456}&\mbox{disable ${\rm MAP}^{\rm on}$ \& ${\rm MAP}^{\rm off}$ + AR-HMM($p=0, N=1$, residual=``normal")}\\
			Exponential~\cite{3388848, NIKOLAUS2019188,POLOCZEK201456} & \mbox{disable ${\rm MAP}^{\rm on}$ \& ${\rm MAP}^{\rm off}$ + AR-HMM($p=0, N=1$,    residual=``exponential")} \\
			\bottomrule
		\end{tabular}
	}
	%\vspace{0.25cm}
	\vspace{3pt}
	\caption{\bf Popular arrival models widely used in MGF-SNC. \rm  The SOTA arrival models used in SNC can be readily represented by the proposed \sys model. }
	\label{tab:my_table}
\end{table*} 

 As aforementioned, the inter-arrival processes of the on and off phases are governed by ${\rm MAP}^{\rm off}(m_1)$ and ${\rm MAP}^{\rm on}(m_2)$ with  state spaces   $\mathcal{S}^{\rm off}=\{0,\cdots, m_1-1\}$ and $\mathcal{S}^{\rm on}=\{0,\cdots, m_2-1\}$, respectively. Let us use a tuple $(o,s^{\rm on}, s^{\rm off})$ to represent the state of the carrier signal at   timeslot $t$, with $o\in\{0,1\}$ indicating the physical state of the carrier signal (0: off and 1: on) and $(s^{\rm on}, s^{\rm off})\in\mathcal{S}^{\rm on}\times \mathcal{S}^{\rm off}$ the hidden state of the carrier signal, and 
   introduce a mapping $\iota_1$ to embed the higher-order state space into a 1-order one:
$$
\iota_1(o,s^{\rm on}, s^{\rm off})=o*m_1*m_2+s^{\rm on}*m_1+s^{\rm off}.
$$ 
 Given the matrix representation of ${\rm MAP}^{\rm on}$ and ${\rm MAP}^{\rm off}$,  the 1-step transition probability matrix $Q$ of the carrier signal   can be put   into a compact form, 
 $$
 Q=\left(\begin{array}{l|l}
 	Q_{\rm off\curvearrowright off}& Q_{\rm off\curvearrowright on}\\ \hline
 	Q_{\rm on \curvearrowright off} & Q_{\rm on\curvearrowright on}\\
 \end{array}
 \right),
 $$
 where $Q_{\rm off\curvearrowright off}$, $Q_{\rm off\curvearrowright on}$, $Q_{\rm on\curvearrowright off}$, and $Q_{\rm on\curvearrowright on}$ are four $(m1*m2)\times(m1*m2)$ block matrices, governing the on-off switching process of the carrier signal. All possible state transitions of $Q$ are summarized in Table~\ref{tab:dmapinfo}.  % When the carrier signal is in on phases, the state of ${\rm MAP}^{\rm off}$ will be frozen. When the carrier signal is in off phases, the state of ${\rm MAP}^{\rm on}$ will be frozen.

 As an illustration, we show how to compute $Q$ in the following example. Suppose   we have ${\rm MAP}^{\rm off}(2)$ with $\mathbf{D}^{\rm off}=\{\nu_0, \nu_{01}, \nu_{02}, \nu_{03}; \nu_1, \nu_{10},\nu_{12}, \nu_{13}; \Delta t\}$ and 
${\rm MAP}^{\rm on}(2)$ with $\mathbf{D}^{\rm on}=\{\nu_0^\prime, \nu_{01}^\prime, \nu_{02}^\prime, \nu_{03}^\prime; \nu_1^\prime, \nu_{10}^\prime,\nu_{12}^\prime, \nu_{13}^\prime; \Delta t\}$. 
The  1-step transition probability matrix $Q$ is shown   in Eqn.~\eqref{eq: example Q}.

%\begin{figure}[]
%	\centering
%	\includegraphics[width=0.35\textwidth]{./imgs/ar-hmm.png}
%	\caption{\bf \label{fig:structure}AR-HMM. { \rm An  HMM is a bivariate discrete process $\{X_t, y_t\}$, where $\{X_t\}$ is a Markov chain. $\{X_t\}$ is not observed and will be estimated by filtering the ``noise" of $\{y_t\}$. HMMs may be generalized by letting $y_t$ depend on its past values and on $X_t$. This new class of models may be defined as an auto-regressive hidden Markov model (AR-HMM).  }}
%\end{figure} 
% 
\begin{algorithm}[h] 
	\SetAlgoLined
	\SetKwProg{Fn}{def}{\string:}{}
	\SetKwProg{Rn}{return}{\string:}{}
	\SetKwInOut{Input}{Input}\SetKwInOut{Output}{Output}
	\algdef{S}[FOR]{ForEach}[1]{\algorithmicforeach\ #1\ \algorithmicdo}
	\caption{\label{alg:HMMFilters} Online EM algorithm for AR-HMMs.}
	\Input {\begin{itemize}
			\item Chain state space \footnotesize$\mathcal{N}=\{0,1,\cdots, N-1\}$\normalsize;
			\item Array of initial probabilities \footnotesize$\Pi_0=(\pi_0, \cdots,\pi_{N-1})\;$\normalsize  
			 such that  $\pi_i$  stores the probability that \footnotesize$X_0=i$\normalsize;
			\item Number of time lags  $p$ of the observable process;
			\item Sequence of observations \footnotesize$(y_0, y_1, \cdots )$\normalsize.
	\end{itemize} }
	\Output{The state transition matrix   \footnotesize$P$\normalsize,  and the model parameters \footnotesize$\Theta$\normalsize.}
	\BlankLine
	\ForEach {\footnotesize$time\; t  $\normalsize}{ 
		\tcc{E-step.}
		\ForEach {\footnotesize$state\; i,j \in \mathcal{N} $\normalsize}{
		\footnotesize	$\Xi_t\gets \mbox{Information matrix at time $t$}$\;
			\footnotesize$b_t\gets \Xi_t\Pi_{t-1}$\normalsize\;
			\footnotesize$Q\gets P^T$\normalsize\;
			\footnotesize$O[:, i]\gets Q\Xi_t O[:, i]+b_t[i]Q[:,i]$\normalsize\;
			\footnotesize$J[:,i,j]\gets Q\Xi_tJ[:, i,j]+b_t[i]Q[j,i]\bm{e}_j$\normalsize\;
			\ForEach {\footnotesize$ lag\;  l, r \in \{0,1,\cdots, p\}$\normalsize}{
				\footnotesize$F[:,i,l]\gets Q\Xi_t F[:,i, l]+b_t[i]Q[:,i]y_{t-l}$\normalsize\;
				\footnotesize$H[:,i,l,r]\gets Q\Xi_t H[:,i, l,r]+b_t[i]Q[:,i]y_{t-l}y_{t-r}$\normalsize\; }
		}
		\footnotesize$\Pi_t\gets Q b_t $\normalsize.
		\BlankLine
		\BlankLine
		\BlankLine
		\tcc{M-step.}
		\ForEach {\footnotesize$state\;  i,j\in \mathcal{N},\;\; lag\;  l \in \{1,\cdots, p\} $\normalsize}{
			\footnotesize$P[i,j]\gets \frac{J[i,j]}{O[i]}$\normalsize\;
			\footnotesize$\mu[i]\gets \frac{F[i,0]-\sum_{l=1}^p\phi[l,i]F[i,l]}{O[i]}$\normalsize\;
			{\footnotesize$\sigma[i]\gets \frac{\mu[i]^2O[i]+H[i,0,0] +  \sum_{l=1}^p\sum_{r=1}^p \phi[l,i]\phi[r,i]H[i,l,r]}{O[i]}$\normalsize}
			\footnotesize${\quad\quad\quad\;\;-\frac{2\mu[i]F[i,0]-2\sum_{l=1}^p\phi[l,i]H[i,0,l]}{O[i]}}$\normalsize\;
			\footnotesize$\phi[l,i]\gets\frac{H[i,0,l]-\sum_{r\neq l}\phi[r,i]H[i,l,r]}{H[i,l,l]}$\normalsize. 
		}
	}	 
\end{algorithm}
 
 \subsubsection{Spatial Model: AR-HMM}\label{ar-hmm} We investigate the models for characterizing the  spatial feature and find that the autoregressive hidden Markov model (AR-HMM)~\cite{hamilton1989new}  is suitable for this job.
 Suppose $X$ is a finite state homogeneous Markov chain.   Let  $\mathcal{N}=\{0,1,\cdots, N-1\}$ be the state space of $X$.  The observable process  $y$  has the form   
$$
	y_{t+1}=\mu(X_t)+\sum_{i=1}^{p}\phi_{i}(X_t)y_{t+1-i}+\sigma(X_t)\epsilon_{t+1}. 
$$
 Namely, the $(t+1)_{th}$ observation is influenced by the previous $p$ observations and by the state of $X$ at the previous time $t$ (i.e., the reaction to $X_t$ is not instantaneous).

The key components of an AR-HMM are: (1) $N$: the number of states of the hidden Markov chain; (2) $P$: the state transition matrix of the hidden Markov chain; (3) $p$: the number of time lags of the observable  process; and (4)  $\Theta=\{(\mu_0, \cdots, \mu_{N-1}), (\phi_{i,0}, \cdots, \phi_{i, N-1})_{i=1,\cdots,p},  (\sigma_0, \cdots, \sigma_{N-1}) \}$: the model parameters of the AR($p$)-HMM($N$). Suppose that the noise $\{\epsilon_t\}$ is a sequence of i.i.d. random variables (which are, in particular, independent of the $X_t$). Given $\{y_1, \cdots, y_t\}$,    $N$ and $p$ can be chosen according to the AIC/BIC information criteria~\cite{1100705, 10.2307/2958889}  as well as cross-validation,   and $P$ and $\Theta$ can be  estimated by using the EM-algorithm~\cite{7445144,ZHU2017223}. 
Define  
$$
\small{
	A(X_t)=\begin{bmatrix}
		\phi_1(X_t)&1&&\\
		\phi_2(X_t)&0&1&\\
		\vdots&&\ddots&\\
		\phi_p(X_t)&\cdots&&0
	\end{bmatrix},\quad
	C_{(t,t)}=\begin{bmatrix}
		\phi_1(X_t)\\
		\vdots\\
		\phi_p(X_t)
	\end{bmatrix} }
$$
and $C_{(s,t)}=A(X_s)C_{(s+1, t)}$, $\forall s<t$. 
 For  stationary purpose,   
$\lim_{n\to\infty}C_{(t-n,t)}=0.$
When $p=1$ and $N=1$, it is equivalent to  $|\phi|<1$.

% In our experiments, we notice that AR(1)-HMM($N$) is sufficient for modeling various real-world traffic traces.
% For the sake of simplification, we focus on AR(1)-HMM($N$) in the sequel, that is,
% $$
% y_{t+1}=\mu(X_t)+\phi(X_t)y_t+\sigma(X_t)\epsilon_{t+1}.
% $$
% The derivation results can be extended to the cases with $p > 1$ without much effort.  

%\begin{remark}{\rm 
% 	The argument $X_t$ can be extended to a $q$-order hidden Markov chain. For the higher-order case, we could embed the higher-order model into a 1-order one by considering the ensemble of states.
%    For example, suppose $\bm{X}_t=[X_t, X_{t-1},\cdots, X_{t-q+1}]^T$. This $q$-order chain can be embedded into a 1-order homogeneous Markov chain taking values in $\mathcal{N}^q$.}
% \end{remark}

\subsubsection{Observation Process: \sys}
Let us denote
$$
{\small  
	O_t=1_{\{\mbox{timeslot $t$ is in the on phase}\}}\text{ and }
	N(t)=\sum_{i\le t}O_i.
}
$$
The observation process $a$ can then be written as    $a_t=O_t\cdot y_{N(t)}, t=1,2,\cdots$.  The random variable $O_t$ indicates the physical state of the carrier signal at time $t$, and   $y_{N(t)}$ is  the amplitude of the input signal at time $t$.

  Let  $Z$ be a discrete-time Markov chain whose state space is 
$\mathcal{S}=\{\iota(o,s^{\rm on}, s^{\rm off}, i)| (o, s^{\rm on}, s^{\rm off}, i)\in\{0,1\}\times \mathcal{S}^{\rm on}\times\mathcal{S}^{\rm off}\times\mathcal{N}\}$, where $\iota(o,s^{\rm on}, s^{\rm off}, i)=N*\iota_1(o,s^{\rm on}, s^{\rm off})+i$. The transition probability matrix of $Z$ is given by 
 \begin{equation}\label{1step_st}
	 T=\left(
 \begin{array}{l|l}
 	Q_{\rm  off\curvearrowright off}\otimes I_N& Q_{\rm off\curvearrowright on}\otimes P \\\hline
 	Q_{\rm on\curvearrowright off}\otimes I_N& Q_{\rm on \curvearrowright on }\otimes  P
 \end{array}
 \right).
 \end{equation}
Thereby $a_t$   can  be rewritten as
$$
\begin{array}{ll} 
	 a_t=&\displaystyle \breve\mu(Z_{t-1}, Z_t) + \displaystyle \sum_{i=1}^p\breve\phi_i(Z_{t-1},Z_t)y_{N(t-1)+1-i}\\
	 & + \displaystyle \breve\sigma(Z_{t-1}, Z_t)\epsilon_t,
 \end{array}
$$
where  
$\breve{g}(Z_{t-1}, Z_{t})=1_{\{Z_t\ge m_1*m_2*N\}}g_{Z_{t-1}\%N}$, 
$g\in\{\mu, \phi_1, \cdots, \phi_p, \sigma\}$, and  $\%$  is  the modulo division.   

It is worth noticing that many well-known arrival models, such as MAP, batch MAP (BMAP)  et.al.,  are  special cases of the proposed \sys model (see Table~\ref{tab:my_table} for more details). Another benefit of \sys is that it is  additive when the residuals conform to  additive white Gaussian noise. Namely, if each individual flow is fitted by \sys,  the aggregate flow can also be modeled  by \sys.    In view of its strong generalization, \sys can be used to model various real-world traffic flows, which is, however, a great challenge to many existing models.
%\subsection{The Relations to Existing Models}
%

 \begin{figure}
 	\centering
 	\begin{subfigure}[b]{.85\linewidth}
 		\includegraphics[width=\linewidth]{./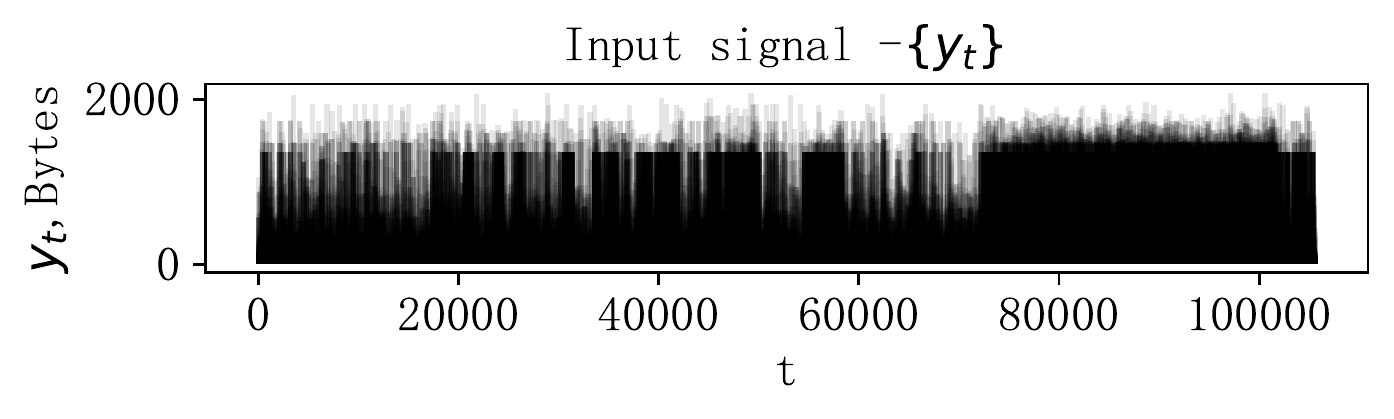}
 		\caption{Spatial feature of the discretized WIDE trace with (src IP, dst IP, protocol)=(203.76.247.112, 204.57.206.30, IPv4) captured at 2022-04-13 08:00. The x-coordinate  $t$ is the index of the time series $\{y_t\}$.} 
 	\end{subfigure}
 	
 	\begin{subfigure}[b]{.45\linewidth}
 		\includegraphics[width=\linewidth]{./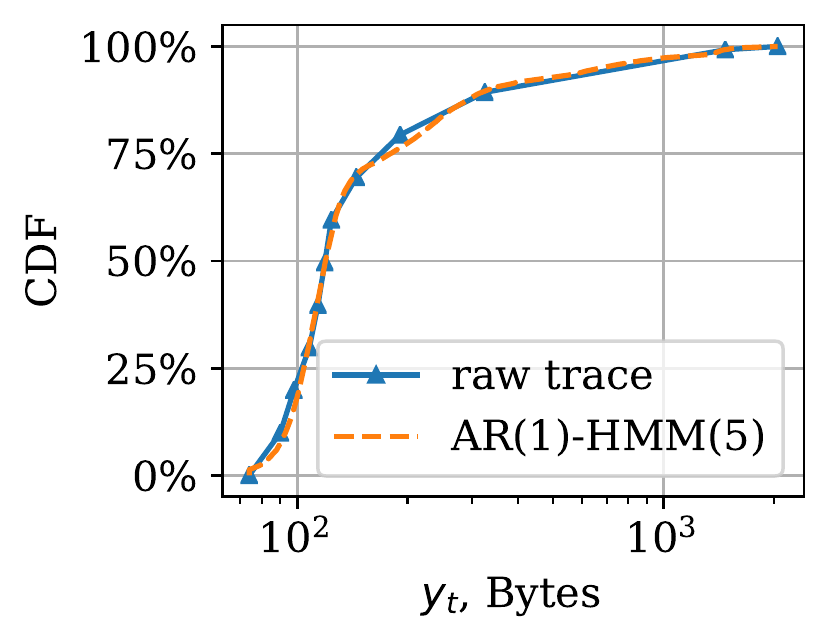}
 		\caption{CDF of $y_t$, $t\in[0, 1e4]$.} 
 	\end{subfigure}
 	\begin{subfigure}[b]{.45\linewidth}
 		\includegraphics[width=\linewidth]{./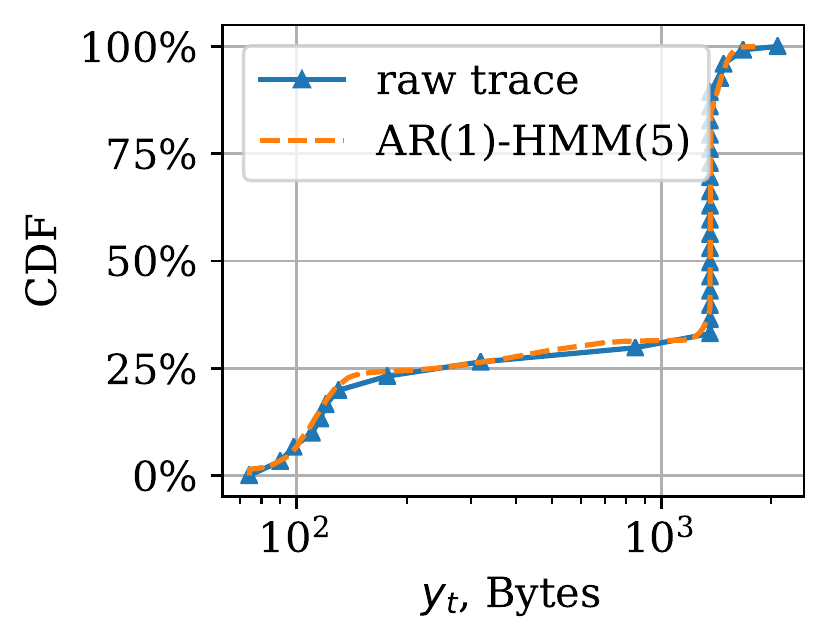}
 		\caption{CDF of $y_t$, $t\in[9e4, 1e5]$.} 
 	\end{subfigure}
 	\caption{{\bf Spatial feature extraction.} AR-HMM can capture the dynamics of the input signal.} 
 	\label{fig:sptiofet}
 \end{figure}

\subsection{The Applications in Traffic Feature Extraction}\label{tfe}
To verify the effectiveness of \sys  in real-world traffic feature extraction, we have conducted experiments on different kinds of traces, such as the multimedia streaming and video conference collected from Huawei commercial and campus networks, as well as open-source WIDE traces from MAWI\footnote{The dataset is available at \href{http://mawi.wide.ad.jp/}{http://mawi.wide.ad.jp/}.}, which is a mixture of commodity applications and research experiments. Due to space limitations and data openness, herein we show the results with typical traffic flows from the WIDE traces.

%From the statistics, we can see that the three   flows exhibit quite different behaviors. 
%This is due to the difference in application types which have different traffic generation mechanisms and user behaviors.

%\begin{figure} 
%	\setlength{\abovecaptionskip}{-1pt}
%	\centering
%	\begin{subfigure}[b]{.49\linewidth}
%		\centering
%		\includegraphics[width=1\textwidth]{./imgs/cdf_size.pdf}
%%		\setlength{\abovecaptionskip}{-1pt}
%		 \caption{CDF of packet sizes.}
%	\end{subfigure}
%	\begin{subfigure}[b]{0.49\linewidth}
%		\centering
%		\includegraphics[width=1\textwidth]{./imgs/IoT_dist.pdf}
%%		\setlength{\abovecaptionskip}{-1pt}
%		\caption{CDF of inter-arrival times.}
%	\end{subfigure}
%	\caption{\label{fig:trace} Statistics of   captured operational traces  from WIDE.}
%\end{figure}
%
% 

To accommodate the SNC framework, 
we adopt a constant timeslot interval of $\frac{mean(IATs)}{40}$ to discretize the time.  We  fit the discrete-time processes
with \sys and generate synthetic traces
using the well-fitted model. Specifically,  we use  Algorithm~\ref{alg:HMMFilters} to fit the spatial model, and    the algorithms listed in~\cite{9500818,10.1007} to fit the temporal model.  As we can see from   Figures~\ref{fig:sptiofet} and~\ref{fig:temporalfet},  the proposed   model can well capture the dynamics of   the input and the carrier signals. 
   We   compare the real and synthetic traces 
by showing the sample path as well as the accumulated
arrivals in Figure~\ref{fig:features}, which indicates that our model can efficiently learn the arrival patterns from the input traffic. 
 We  further  compare these traces in terms of the Hurst exponent H and the coefficient of variation CV (see Table~\ref{tab:comparison} for more details). It can be concluded from  the experimental results   that the proposed   model can exactly capture the various characteristics of traffic  from the decomposed spatio-temporal signals.   
 \begin{figure}
 	\centering
 	\begin{subfigure}[b]{.85\linewidth}
 		\includegraphics[width=\linewidth]{./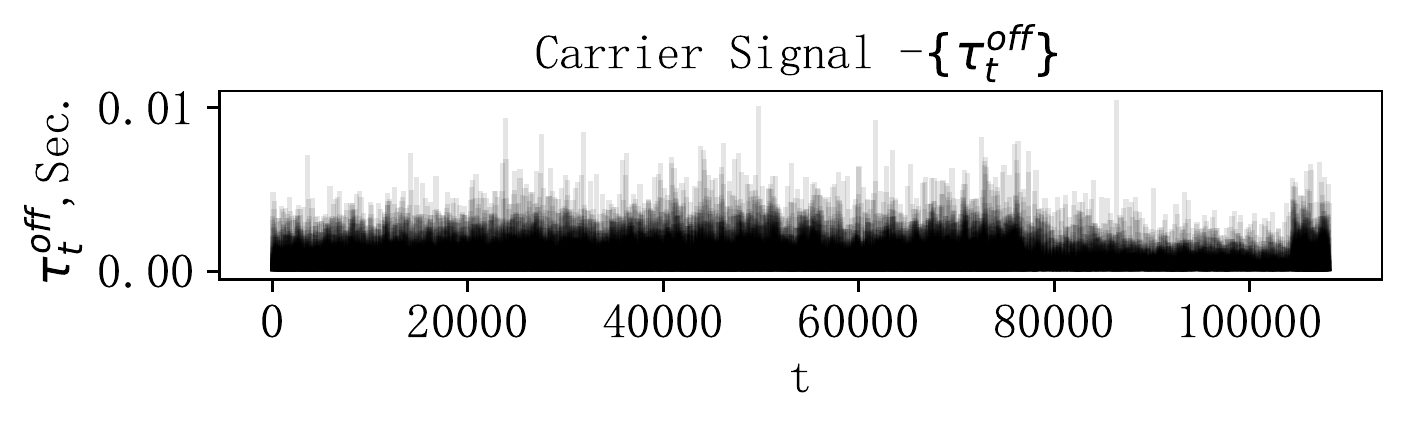}
 		\caption{Temporal feature of the discretized WIDE trace as  shown in Figure~\ref{fig:sptiofet}.  The x-coordinate  $t$ is the index of the time series $\{\tau_t^{\rm off}\}$.\\} 
 	\end{subfigure}      
	\begin{subfigure}[b]{.45\linewidth}
	\includegraphics[width=\linewidth]{./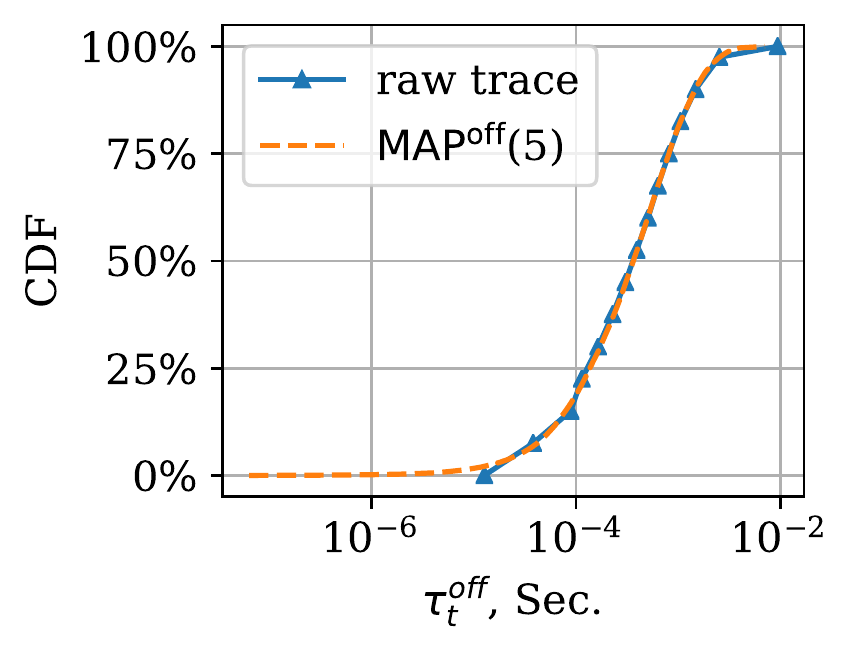}
	\caption{CDF of $\tau_t^{\rm off}$, $t\in[0, 1e4]$} 
\end{subfigure}
\begin{subfigure}[b]{.45\linewidth}
	\includegraphics[width=\linewidth]{./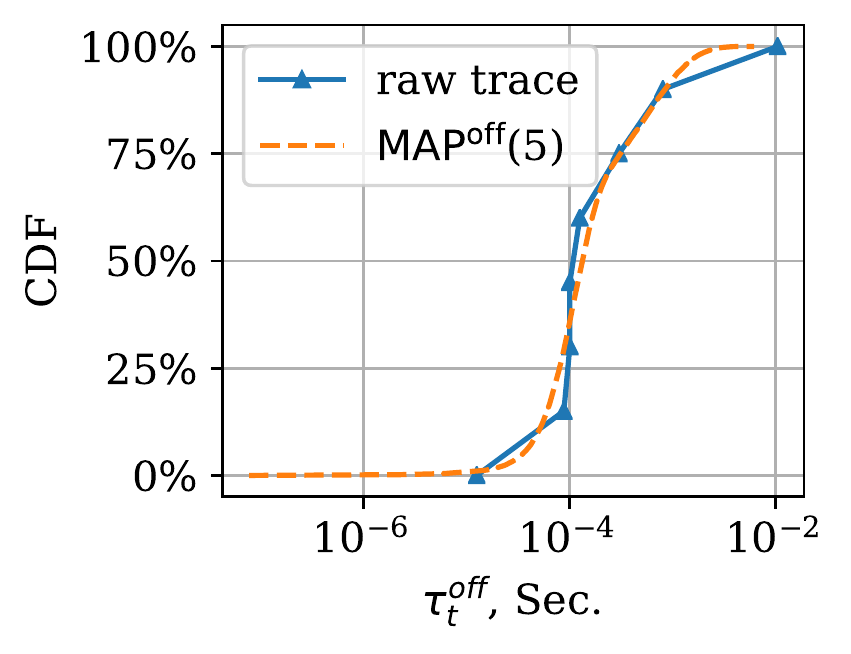}
	\caption{CDF of $\tau_t^{\rm off}$, $t\in[9e4, 1e5]$} 
\end{subfigure}
\caption{{\bf Temporal feature extraction.}  MAP can capture the dynamics of the carrier signal.}
\label{fig:temporalfet}
\end{figure}

 \begin{figure*}
 	\setlength{\abovecaptionskip}{-0.05pt}
 	\centering 
 	\begin{subfigure}[b]{.3\linewidth}
 		\centering
 		\includegraphics[width=0.95\textwidth]{./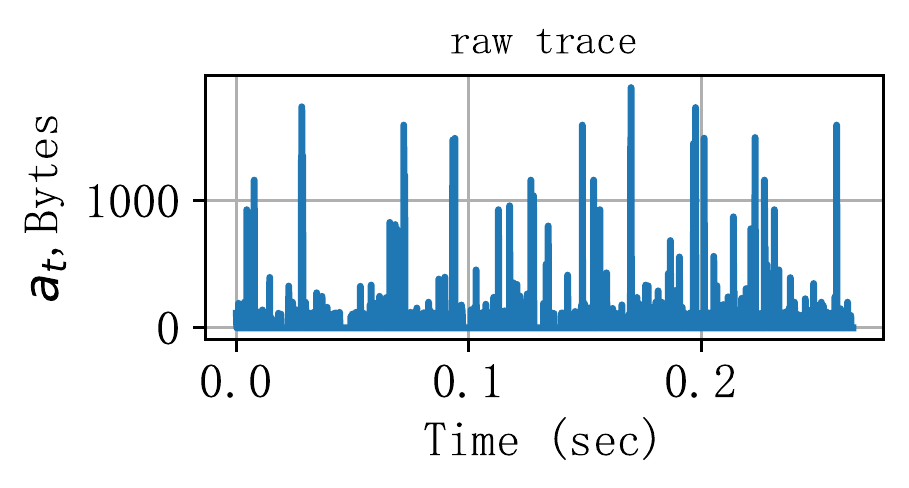}
 		\caption{ }  
 	\end{subfigure}
 	\begin{subfigure}[b]{.3\linewidth}
 		\centering
 		\includegraphics[width=0.95\textwidth]{./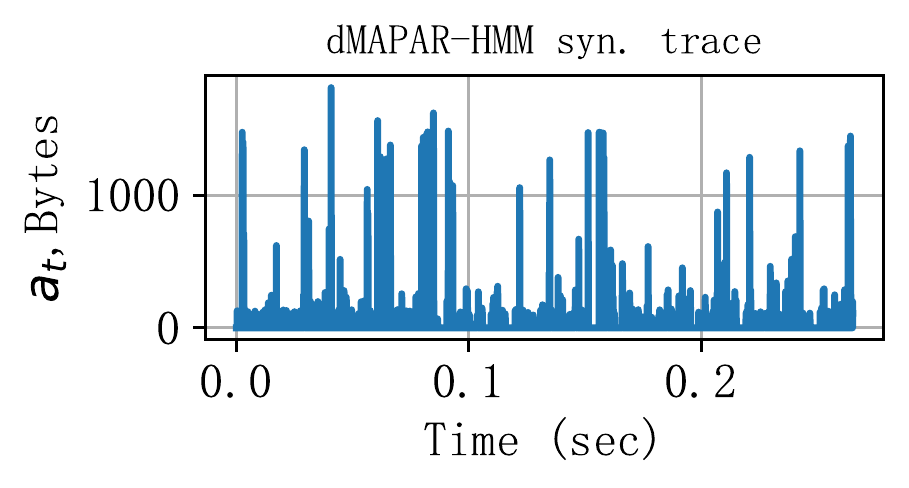}
 		\caption{ }  
 	\end{subfigure}
 	\begin{subfigure}[b]{.28\linewidth}
 		\centering
 		\includegraphics[width=1\textwidth]{./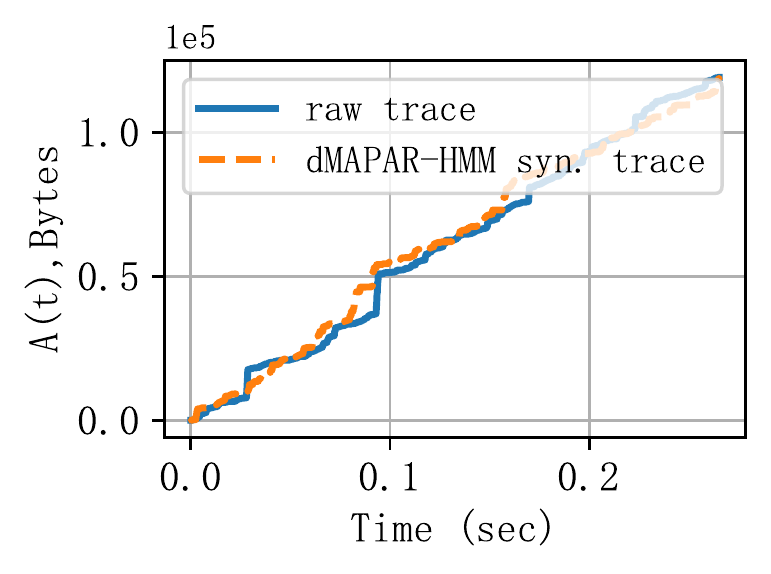}
 		\caption{ }  
 	\end{subfigure}
 	\vspace{1em}
 	\caption{\label{fig:features} {\bf Sample paths and cumulative arrivals of raw and synthetic traces.}  {\rm The figures are, from left to right: (a) discretized WIDE trace; (b) synthetic trace with   well estimated \sys; and (c) accumulative arrival (bytes) over a 0.25-second window. }}
 \end{figure*}		 
 
\begin{table*} 
	\begin{center}
		\footnotesize{
			\begin{tabular}{ |lccccc| } 
				\hline
				WIDE Trace captured at 2022-04-13 08:00   & FlowID      & H ({\it empirical}) & H ({\it estimated}) & CV ({\it empirical}) & CV  ({\it estimated})  \\ \hline
			  (src IP, dst IP, protocol)=(203.76.247.112, 204.57.206.30, IPv4) & flow1 & 0.2977 & 0.2978 &11.43&11.18 \\
			  (src IP, dst IP, protocol)=(13.212.219.174, 163.194.192.18, UDP) & flow2  & 0.2739 & 0.2790 &21.59&22.27\\
			  (src IP, dst IP, protocol)=(131.142.41.211, 144.195.33.147, TCP) & flow3  & 0.3085  & 0.3055  &27.04 & 26.20\\
				\hline
		\end{tabular}}
		\vspace{3pt}
		\caption{\label{tab:comparison} {\rm Comparison between raw and synthetic traces. }
		}
	\end{center}
\end{table*}

%\begin{figure}[]
%	\setlength\heightfiga{1.5cm}
%	\setlength\heightfigb{1.5cm}
%	\setlength\heightcapa{1.5\baselineskip}
%	\setlength\heightcapb{1.5\baselineskip}
%	\setlength\heightcapc{\heightcapa}
%	\setlength\heightfigc{\heightfiga+\heightfigb+\heightcapa}
%	\setlength\heightfig{\heightfigc+\heightcapc}
%	
%	\begin{minipage}[b][\heightfig][t]{0.46\linewidth}\centering
%		\includegraphics[height=\heightfiga,width=\textwidth]{./imgs/flow1_real.pdf}
%		\parbox[b][\heightcapa][t]{1\linewidth}{\subcaption{}}
%		\includegraphics[height=\heightfigb,width=\textwidth]{./imgs/flow1_syn.pdf}
%		\parbox[b][\heightcapb][t]{1\linewidth}{\subcaption{}}
%		
%	\end{minipage}\hfill
%	\begin{minipage}[b][\heightfig][t]{0.52\linewidth}
%		\centering
%		\includegraphics[height=\heightfigc,width=1.\textwidth]{./imgs/cumsum.pdf}
%		\parbox[b][\heightcapc][t]{1.\linewidth}{\subcaption{}}
%	\end{minipage}%
%	\caption{\bf \label{fig:features} Sample paths and cumulative arrivals of raw and synthetic traces. {\rm The figures are, from left to right: (a) discretized WIDE trace with (source IP, destination IP, protocol) =  (203.76.247.112,  204.57.206.30, IPv4); (b) synthetic trace with   well estimated \sys; and (c) accumulative arrival (bytes) over a 0.25-second window. }}
%\end{figure}
%

\section{Interface to MGF-SNC}\label{sec:mgf_snc}

In this section, we   deduce an interface for the spatio-temporal model   to   MGF-SNC. 

An arrival process $A(s,t)$ is defined as being $(\sigma_A, \rho_A)$-constrained if for all $\theta>0$, there exist  $\sigma_A(\theta)$ and  $\rho_A(\theta)\in \mathbb{R}_+\cup\{+\infty\}$ such that the MGF of $A(s,t)$ satisfies
$$
\mathbb{E}[e^{\theta A(s,t)}]\le e^{\theta(\sigma_A(\theta)+\rho_A(\theta)(t-s))}.
$$ 
This is related to the theory of effective bandwidth that is defined as
\begin{equation}\label{equ:effectivebd}
\frac{1}{\theta(t-s)}\ln \mathbb{E}[e^{\theta A(s,t)}].
\end{equation}
The effective bandwidth characterizes the statistical behavior of traffic arrivals and has be applied to the derivation of closed-form solutions for end-to-end performance bounds~\cite{Fidler06}.

\subsection{MGF of \sys}
Consider an arrival process   with \sys arrivals. Let $V(s,t)$ be the vector that stands for $[\mathbb{E}[e^{\theta A(s,t)}|Z_s=k]]_{k\in\mathcal{S}}$,  which is the MGF of $A(s, t)$ conditional on the state of $Z_s$ at time $s$. By definition, we have $V_k(t,t)=1, \forall k\in\mathcal{S}$. For  all $s\le t$, we have the following result for the MGF.

\begin{theorem}\label{cmgf}
 For all $s\le t$, we have 
$$
\begin{array}{lll}
V_k(s,t)&=&\displaystyle \mathbb{E}\Big[e^{\sum_{i=1}^p[\theta_i(s,t)-\theta] y_{N(s)+1-i}}\\
&&\displaystyle \;\;\;\;\times\prod_{l=s+1}^tM(\theta_1(l,t); Z_{l-1}, Z_l)|Z_s=k\Big],
\end{array}
$$
	where   $\theta_i(s,t)=\theta\varphi_i(s,t)$, with 
	\begin{icompact}
		\item $\varphi_{p+1}(s,t)=1$ 
		\item $\varphi_i(t,t)= 1$
		\item 
		$\varphi_i(s-1,t)=  (1-O_s)\varphi_i(s, t)+O_s[\varphi_1(s,t)\breve{\phi}_i(Z_{s-1}, Z_s)+\varphi_{i+1}(s,t)]$ 
	 
	\end{icompact}
      $\forall i=1, \cdots, p$   and $M(\cdot; Z_{l-1}, Z_l)$ is the MGF of $\breve{\mu}(Z_{l-1}, Z_l)+\breve{\sigma}(Z_{l-1}, Z_l)\epsilon_l$ given $Z_{l-1}, Z_l$.  
\end{theorem}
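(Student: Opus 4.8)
The plan is to establish the identity by a downward induction on $s$, from $s=t$ toward smaller $s$, peeling off one timeslot $a_s$ at a time and letting the coefficients $\varphi_i(s,t)$ absorb the autoregressive feedback. To make the induction self‑contained I would work conditionally on the whole driving path $Z_{s:t}=z_{s:t}$ together with the length‑$p$ vector $\vec y_s=(y_{N(s)},y_{N(s)-1},\dots,y_{N(s)+1-p})$ of the most recent input‑signal values as of time $s$; given this data every $O_l$, every $N(l)$, and hence every $\theta_i(l,t)=\theta\varphi_i(l,t)$ is a deterministic function of the path, and only the innovations $\epsilon_{s+1},\dots,\epsilon_t$ — i.i.d.\ and independent of the $Z$‑path and of the past $y$'s — remain random. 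The statement I would carry through the induction is that $\mathbb{E}[e^{\theta A(s,t)}\mid Z_{s:t}=z_{s:t},\vec y_s]$ equals $e^{\sum_{i=1}^p(\theta_i(s,t)-\theta)y_{N(s)+1-i}}\prod_{l=s+1}^{t}M(\theta_1(l,t);z_{l-1},z_l)$, and the theorem then drops out by taking $\mathbb{E}[\,\cdot\mid Z_s=k\,]$ of both sides: the right‑hand side is measurable with respect to $(Z_{s:t},\vec y_s)$, so the tower property converts it into $V_k(s,t)$ on the left and into exactly the claimed expectation on the right.

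For the base case $s=t$ both sides equal $1$, since $A(t,t)=0$, the product is empty, and $\varphi_i(t,t)=1$ makes the exponential factor $e^0$. For the inductive step I write $A(s-1,t)=a_s+A(s,t)$ and condition additionally on $\epsilon_s$: then $a_s$ is a fixed function of $\epsilon_s$ and the fixed lags, and $\vec y_s$ is a fixed function of $\epsilon_s,\vec y_{s-1},z_{s-1},z_s$ (it equals $\vec y_{s-1}$ when $O_s=0$, and is $\vec y_{s-1}$ shifted by one with $a_s$ prepended and the oldest entry dropped when $O_s=1$). Markovianity of $Z$ and independence of $\epsilon_s$ from $\epsilon_{s+1},\dots,\epsilon_t$ and from the $Z$‑path let me invoke the inductive hypothesis on $\mathbb{E}[e^{\theta A(s,t)}\mid z_{s:t},\vec y_s]$ and then integrate out $\epsilon_s$. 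In the off case $O_s=0$ nothing happens: $a_s=0$, $M(\theta_1(s,t);z_{s-1},z_s)$ is the MGF of the constant $0$, hence $1$, and the recursion gives $\varphi_i(s-1,t)=\varphi_i(s,t)$, so the product acquires a trivial factor and the exponent is untouched. In the on case $O_s=1$ I substitute $a_s=\breve\mu+\sum_{i=1}^p\breve\phi_i\,y_{N(s-1)+1-i}+\breve\sigma\,\epsilon_s$, merge the loose factor $e^{\theta a_s}$ with the $i=1$ term $e^{(\theta_1(s,t)-\theta)y_{N(s)}}$ (recall $y_{N(s)}=a_s$ here) into $e^{\theta_1(s,t)a_s}$, and take $\mathbb{E}_{\epsilon_s}$; the $\breve\mu+\breve\sigma\epsilon_s$ piece of $\theta_1(s,t)a_s$ yields precisely $M(\theta_1(s,t);z_{s-1},z_s)$, extending the product by one factor.

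The only genuinely delicate step — and the one I expect to be the main obstacle — is verifying that the leftover deterministic exponent in the on case collapses to $\sum_{i=1}^p(\theta_i(s-1,t)-\theta)y_{N(s-1)+1-i}$. After expanding $\theta_1(s,t)a_s$ and re‑indexing the lags (now measured relative to $N(s-1)=N(s)-1$, with the $p$‑th, oldest, lag dropped), the coefficient of $y_{N(s-1)+1-i}$ comes out as $\theta_1(s,t)\breve\phi_i+(\theta_{i+1}(s,t)-\theta)$ for $i<p$ and as $\theta_1(s,t)\breve\phi_p$ for $i=p$; dividing by $\theta$, this is exactly the stated recursion $\varphi_i(s-1,t)=\varphi_1(s,t)\breve\phi_i(Z_{s-1},Z_s)+\varphi_{i+1}(s,t)$, and the convention $\varphi_{p+1}(s,t)=1$ (equivalently $\theta_{p+1}(s,t)=\theta$) is precisely what makes the $i=p$ line correct. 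Everything else — the Markov property of $Z$, the i.i.d.\ and independence properties of the $\epsilon_t$, and the reading of $M(\,\cdot\,;z_{l-1},z_l)$ as the conditional MGF of $\breve\mu(z_{l-1},z_l)+\breve\sigma(z_{l-1},z_l)\epsilon_l$ — is routine bookkeeping. As a sanity check one can specialize to $p=1$, $N=m_1=m_2=1$, always‑on, where $\varphi_1(t-n,t)=\sum_{j=0}^{n}\phi^{j}$ and $A(s,t)=\sum_{l>s}y_l$ has a closed‑form expansion that reproduces both the boundary factor $e^{(\theta_1(s,t)-\theta)y_{N(s)}}$ and the per‑slot factors $M(\theta_1(l,t))$.
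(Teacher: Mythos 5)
Your proposal is correct and follows essentially the same backwards-induction argument as the paper: peel off $a_s$, merge $e^{\theta a_s}$ with the $i=1$ lag term, integrate out $\epsilon_s$ to produce the new $M(\theta_1(s,t);Z_{s-1},Z_s)$ factor, and verify that the leftover exponent reproduces the $\varphi_i$ recursion. The only differences are cosmetic: you split the on/off cases explicitly where the paper substitutes the indicator identities $y_{N(s)}=(1-O_s)y_{N(s-1)}+\breve\mu+\sum_i\breve\phi_i y_{N(s-1)+1-i}+\breve\sigma\epsilon_s$ and $y_{N(s)+1-i}=O_s y_{N(s-1)+2-i}+(1-O_s)y_{N(s-1)+1-i}$ in one step, and you are more careful than the paper about conditioning on the full path and lag vector before invoking the tower property.
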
 
\begin{proof}
The theorem can be easily proved by induction. Interest readers can refer to the appendix  for the details.
\end{proof}
 
 \subsection{  Asymptotic Effective Bandwidth}
 
As to Theorem~\ref{cmgf}, we first note  that  $O_s=1_{\{Z_s\ge m_1*m_2*N\}}.$  If we let      $\varphi(s,t)=[\varphi_1(s,t), \cdots, \varphi_p(s,t)]^T$, $b(Z_s)=[0,0,\cdots,O_s]^T$   and   
$$
 \scriptsize{
 	A(Z_{s-1},Z_s)=
 \begin{bmatrix}
	1-O_s+ \breve{\phi}_1(Z_{s-1}, Z_s)& O_s&\\
	\breve{\phi}_2(Z_{s-1}, Z_s)&1-O_s&O_s&\\
	\vdots& &\ddots\\
	\breve{\phi}_p(Z_{s-1}, Z_s)&\cdots&\cdots&1-O_s
\end{bmatrix},
}
$$
 we would  have 
\begin{equation}\label{recursive}
	\small{
\begin{array}{ll}
	 &\varphi(s-1,t) \\
	=& b(Z_s)+A(Z_{s-1}, Z_s) \varphi(s, t)\\
    =&b(Z_s)+A(Z_{s-1}, Z_s)[b(Z_{s+1})+A(Z_{s}, Z_{s+1})\varphi(s+1, t)]\\
	=&\cdots\\
	=&\varphi(s-1,t-1)+  A(Z_{s-1}, Z_s)\cdots A(Z_{t-2}, Z_{t-1})\\
	&\quad\times [b(Z_t)+A(Z_{t-1}, Z_t)\bm{1}-\bm{1}]\\
	\to& \varphi(s-1,t-1),\quad \mbox{as $t-s\to \infty$}.
\end{array}
}
\end{equation}

The last step of Eqn.~(\ref{recursive}) used
 the stationary condition of AR-HMM, which implies that  $\{\varphi(s,t)\}_{t}$ converges as $t$ tends to infinity.    
Let     
$
 \lim_{t-s\to \infty}\varphi(s,t)=W_s,
$
the following results hold for the effective bandwidth. 
  
 \begin{theorem}\label{clm}
 	Let    $\bm{v}=[\mathbb{E}[W_s|Z_s=k]]_{k\in\mathcal{S}}$,  $A$ be the matrix $[A(Z_{s-1}=i, Z_s=j)]_{i,j\in \mathcal{S}}$ with its element belonging to $\mathbb{R}^{p\times p}$, and $B=\left[
	[0, \cdots, 0]_{p\times 1},
	\cdots,
	[0, \cdots, 0]_{p\times 1},\right.$ $\left.
	[0, \cdots, 1]_{p\times 1},
	\cdots,
	[0, \cdots, 1]_{p\times 1}
	\right]^T $.
%$$
%B=\left[
%\begin{array}{c}
%	[0, \cdots, 0]^T_{p\times 1} \\
%	\vdots\quad\\
%	
%	[0, \cdots, 0]^T_{p\times 1} \\
%	
%	[0, \cdots, 1]^T_{p\times 1} \\
%	\vdots\quad\\
%	
%	[0, \cdots, 1]^T_{p\times 1} \\
%\end{array}
%\right] 
%$$ 
We then have 
\begin{enumerate}
	\item
  $\bm{v}=  (I_{|\mathcal{S}|\times p}-T\circ A)^{-1}T B$;    
	\item  Denote $\phi_{\max}=\max   |\phi_{i,j}|$  and let  $\bm{\pi}=[\pi_i]_{i\in \mathcal{S}}$ be the stationary probability vector of $Z$, where $\pi_i$ stands for the stationary probability of state $i$. 
	We then have
	$$
	\small{
	\begin{array}{lll}
	\displaystyle \lim_{\phi_{\max}\to 0}1/\theta(t-s)(\ln \mathbb{E}[e^{\theta A(s,t)}]- 
	\displaystyle \ln \bm{\pi}{\rm R} \Gamma^{t-s}(\theta)\bm{1})=0,
	\end{array}}
	$$
where
$$
\small{  
\left\{  
\begin{array}{lll}
	{\rm R}={\rm diag}([\mathbb{E}[e^{ \sum_{k=1}^p\theta(v_{i,k}-1) y_{N(s)+1-k}}|Z_s=i]]_{i\in\mathcal{S}})\\
	\Gamma(\theta)=[\psi_{i,j}(\theta)]_{i,j\in\mathcal{S}}\circ T\\
	\psi_{i,j}(\theta)=\left\{
	\begin{array}{ll}
		\displaystyle e^{\theta\breve{\mu}_{i,j} v_{j,1}+\frac{1}{2}\theta^2\breve{\sigma}_{i,j}^2 v_{j,1}^2},&\epsilon_t\sim\mathcal{N}(0,1),\\
		\displaystyle \frac{e^{\theta\breve{\mu}_{i,j} v_{j,1}}}{1-\theta\breve{\sigma}_{i,j} v_{j,1}},&\epsilon_t\sim {\rm Exp}(1).
	\end{array}
	\right. 
\end{array}
\right.
}
$$
\end{enumerate}
 
 \end{theorem}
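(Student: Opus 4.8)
The plan is to handle the two parts by different means: part~1 is a fixed‑point computation on the recursion already set up in the text, and part~2 is a perturbation argument around $\phi_{\max}=0$.

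\textbf{Part (1).} I would start from the one‑step form of Eqn.~(\ref{recursive}), $\varphi(s-1,t)=b(Z_s)+A(Z_{s-1},Z_s)\varphi(s,t)$. By the stationary condition of the AR‑HMM the remainder term in the last line of Eqn.~(\ref{recursive}) vanishes, so $\varphi(s,t)\to W_s$ as $t\to\infty$ and the recursion passes to the limit: $W_{s-1}=b(Z_s)+A(Z_{s-1},Z_s)W_s$ almost surely. Now condition on $Z_{s-1}=i$ and then on $Z_s=j$. The block $A(i,j)$ is deterministic given the two states, $\mathbb{E}[b(Z_s)\mid Z_{s-1}=i]=(TB)_i$ by the definition of $B$, and — since $W_s$ is a deterministic function of $(Z_s,Z_{s+1},\dots)$ and carries no noise — the Markov property gives $\mathbb{E}[W_s\mid Z_{s-1}=i,Z_s=j]=\mathbb{E}[W_s\mid Z_s=j]=\bm{v}_j$. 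Stacking the $|\mathcal{S}|$ blocks of size $p$ gives $\bm{v}=TB+(T\circ A)\bm{v}$, hence $\bm{v}=(I_{|\mathcal{S}|\times p}-T\circ A)^{-1}TB$. For the inverse I would argue $\rho(T\circ A)<1$: long products of the blocks $A(\cdot,\cdot)$ contract — this is essentially the content of the stationary condition, and at $\phi=0$ the ``on'' blocks are nilpotent, which makes it transparent, while $T$ is stochastic — so the Neumann series $\sum_{n\ge0}(T\circ A)^n$ converges, which also re‑establishes that $W_s$ has a finite conditional mean.

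\textbf{Part (2).} By Theorem~\ref{cmgf} and stationarity of $Z$ one has $\mathbb{E}[e^{\theta A(s,t)}]=\bm{\pi}^{\mathsf T}V(s,t)$, so the object is the stationary expectation of a path‑dependent product over $Z_s,\dots,Z_t$. The first thing I would record is an \emph{exact} identity at $\phi_{\max}=0$: a one‑line induction on the recursion for $\varphi_i$ (using $\varphi_{p+1}\equiv\varphi_i(t,t)\equiv1$ and $\breve\phi_i\equiv0$) shows $\varphi_i(l,t)\equiv1$; hence $\bm{v}=\bm{1}$, ${\rm R}=I$, $\psi_{i,j}(\theta)=M(\theta;i,j)$, and $\mathbb{E}[e^{\theta A(s,t)}]=\bm{\pi}^{\mathsf T}V(s,t)$ collapses to $\bm{\pi}^{\mathsf T}{\rm R}\,\Gamma^{t-s}(\theta)\bm{1}$ because every factor becomes local in $(Z_{l-1},Z_l)$. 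For general small $\phi_{\max}$ the idea is the same: inside $V_k(s,t)$ replace each $\varphi_i(l,t)$ by its conditional mean $v_{Z_l,i}$; then the boundary factor $e^{\sum_i\theta(v_{Z_s,i}-1)y_{N(s)+1-i}}$ produces ${\rm R}$, each term $M(\theta v_{Z_l,1};Z_{l-1},Z_l)$ becomes $\psi_{Z_{l-1},Z_l}(\theta)$, and the stationary expectation of the resulting local product is exactly $\bm{\pi}^{\mathsf T}{\rm R}\,\Gamma^{t-s}(\theta)\bm{1}$. Since $\bm{v}$, ${\rm R}$, $\Gamma(\theta)$ and $\mathbb{E}[e^{\theta A(s,t)}]$ all depend real‑analytically on the AR coefficients near $\phi=0$ (MGFs finite there, under $\theta\breve\sigma_{i,j}v_{j,1}<1$ in the $\mathrm{Exp}(1)$ case) and agree at $\phi=0$, the difference $\ln\mathbb{E}[e^{\theta A(s,t)}]-\ln\bm{\pi}^{\mathsf T}{\rm R}\,\Gamma^{t-s}(\theta)\bm{1}$ tends to $0$ as $\phi_{\max}\to0$, which is the claimed limit once divided by $\theta(t-s)$. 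Letting in addition $t-s\to\infty$ and applying Perron--Frobenius to $\Gamma(\theta)$ then identifies the asymptotic effective bandwidth with $\frac{1}{\theta}\ln\lambda_{\max}(\Gamma(\theta))$.

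\textbf{Where the work is.} The linear algebra of part~1 and the $\phi=0$ identity are routine. The delicate step is making the substitution $\varphi_i(l,t)\mapsto v_{Z_l,i}$ quantitative: one must control $|\varphi_i(l,t)-W_{l,i}|$, which is small only when $t-l$ is large (it is precisely the remainder term of Eqn.~(\ref{recursive})), \emph{and} the centering error $W_{l,i}-v_{Z_l,i}$, which is $O(\phi_{\max})$ in an $L^1$ sense because the two coincide at $\phi=0$, and then propagate both through the $(t-s)$‑fold product. For the bare statement, continuity at fixed $(s,t)$ is enough; but driving the per‑factor multiplicative error down to $1+O(\phi_{\max})$ so that the log of the product is only $O\big((t-s)\phi_{\max}\big)$ — i.e.\ an estimate uniform in $t-s$, which is what actually lets one transfer the conclusion to the asymptotic effective bandwidth — is the part I expect to cost the most effort.
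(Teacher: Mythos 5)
Your part (1) is the paper's argument verbatim: pass to the limit in Eqn.~(\ref{recursive}) to get $W_{s-1}=b(Z_s)+A(Z_{s-1},Z_s)W_s$, condition on $(Z_{s-1},Z_s)$ and use the Markov property to obtain $\bm{v}=TB+(T\circ A)\bm{v}$; your extra remark on $\rho(T\circ A)<1$ (nilpotency of the on-blocks at $\phi=0$) is a point the paper simply omits. For part (2) the skeleton is also the same — exact collapse to $\bm{\pi}{\rm R}\Gamma^{t-s}(\theta)\bm{1}$ at $\phi_{\max}=0$, then a perturbation around that point — but the key lemma differs. The paper does not invoke continuity/analyticity; it proves a pathwise quantitative bound (Claim~\ref{clmproof}): $|\varphi(s,t)-v_{Z_s}|\le c_{s,t}\,\varepsilon$ by backward induction, where $\varepsilon$ measures the defect of $\bm{v}$ as a fixed point of the random recursion and $c_{s,t}$ satisfies $c_{s,t}=\bm{1}+\mathrm{abs}(A(Z_s,Z_{s+1}))c_{s+1,t}$. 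Feeding this into Theorem~\ref{cmgf} yields the explicit inequality $\mathbb{E}[e^{\theta A(s,t)}]\le e^{(h_1\varepsilon+h_2\varepsilon^2)(t-s-1)+O(\varepsilon)}\,\bm{\pi}{\rm R}\Gamma^{t-s}(\theta)\bm{1}$, and the limit follows because $\varepsilon\to0$ as $\phi_{\max}\to0$. This buys a nonasymptotic multiplicative error bound (which is what actually feeds the $(\sigma,\rho)$-envelope), whereas your soft continuity argument only establishes the stated limit at fixed $(s,t)$ — sufficient for the theorem as written, but weaker. Two further remarks: your two-stage decomposition $|\varphi_i(l,t)-W_{l,i}|+|W_{l,i}-v_{Z_l,i}|$ is a detour — the first term is not small for $l$ near $t$, and the paper's single induction against $v_{Z_s}$ avoids the issue entirely; and the uniformity-in-$(t-s)$ concern you raise is real but is also not resolved by the paper (its error exponent involves $c_{i,t,1}(t-s)$, so the bound is likewise taken at fixed horizon before letting $\phi_{\max}\to0$). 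Your continuity route does have one small advantage: it gives two-sided control for free, whereas the paper only writes down the upper bound.
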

 \begin{proof}
 The proof can be found in the appendix. 
\end{proof}

$$\quad$$

Based on Theorem~\ref{clm} and Eqn.~\ref{equ:effectivebd}, we can obtain the asymptotic effective bandwidth. Since $\Gamma(\theta)$ is diagonalizable in  $\mathbb{C}$, there exists a diagonal matrix $D$ and a matrix $U$ such that $\Gamma(\theta)=UDU^{-1}$. Hence,
$$
\begin{array}{ll}
	\bm{\pi}{\rm R}\Gamma^{t-s}(\theta)\bm{1}&=\bm{\pi}{\rm R}UD^{t-s}U^{-1}\bm{1}\\
	&=\sum u_i D_{ii}^{t-s}\\
	&=\max |D_{ii}|^{t-s}\sum u_i (\frac{D_{ii}}{\max |D_{ii}|})^{t-s}\\
	&\le \max|D_{ii}|^{t-s}\sum u_i\\
	&=e^{\ln\sum u_i+\ln\max|D_{ii}|(t-s)},
\end{array}
$$
with the notation $u_i=(\bm{\pi}{\rm R}U)_i(U^{-1}\bm{1})_i$.

  \begin{algorithm}[t] 
  	\SetAlgoLined
  	\SetKwProg{Fn}{def}{\string:}{}
  	\SetKwProg{Rn}{return}{\string:}{}
  	\SetKwInOut{Input}{Input}\SetKwInOut{Output}{Output}
  	\SetKwFunction{modelest}{FeatureExtractor}
  	\SetKwFunction{aefbd}{effective$\_$bandwidth}
  	\caption{\label{alg:algo}  Feature Extraction}
  	\Input {flow, $\Delta t$, $\theta$}
  	\Output{$(\sigma, \rho)$-bound}
  	\BlankLine
  	\Fn(){\modelest{flow, $\Delta t$, $\theta$}}{
  		\tcc{from continuous-time to discrete-time system.}
  		$\{a_t\} \gets$  discretize the    traffic flow with $\Delta t$ \;
  		$(Q, P,\Theta) \gets$  fit 	\sys  with $\{a_t\}$\;

  		\tcc{calculate the input flow's $(\sigma(\theta), \rho(\theta))$-bound with the   extracted features.}
  			$T\gets$   Eqn.(\ref{1step_st})\; 
  			$\bm\pi\gets \bm\pi T=\bm\pi, \sum\pi_i=1$\;
  		$R(\theta)\gets$   Theorem~\ref{clm} \;
  		$\Gamma(\theta)\gets$  Theorem~\ref{clm} \;
  		 $(\sigma_A(\theta), \rho_A(\theta)) \gets$   Eqn.~\eqref{efbd}.\\
  		\textbf{return}	 $(\sigma_A(\theta), \rho_A(\theta))$
  	} 
  \end{algorithm}

The asymptotic effective bandwidth can then be programmatically calculated with
\begin{equation}\label{efbd}
\begin{array}{lll}
 \frac{1}{\theta(t-s)}\ln \mathbb{E}[e^{\theta A(s,t)}]& \sim &\frac{1}{\theta(t-s)}\ln\bm{\pi}{\rm R}\Gamma^{t-s}(\theta)\bm{1}\\
&\le & \sigma_A(\theta)/t-s+\rho_A(\theta),
\end{array}
\end{equation}
where $\sigma_A(\theta)= \ln\sum u_i/\theta$ and  $\rho_A(\theta)=\frac{\ln\max|D_{ii}|}{\theta}$.

\section{End-to-End Performance Evaluation}\label{sec:app}
 
We have established a spatial-temporal model   for network traffic, and introduced a ready-to-use interface for it to the MGF-SNC in previous sections. In this section, we shall use it  directly to conduct end-to-end network analysis. 
Alike arrival process, a service process $S(s,t)$ is said to be   $(\sigma_S, \rho_S)$-constrained if for all $\theta>0$, there exist  $\sigma_S(-\theta)$ and  $\rho_S(-\theta)\in \mathbb{R}_+\cup\{+\infty\}$ such that 
$$
\mathbb{E}[e^{-\theta S(s,t)}]\le e^{\theta(\sigma_S(-\theta)-\rho_S(-\theta)(t-s))}.
$$ 

There are three atomic operations underlying the MGF-SNC analysis framework~\cite{Chang2000}: \textit{Aggregation}, \textit{Left-over} and \textit{Concatenation}.  With these tools, we can reduce any feed-forward   to a single flow - single server topology.     From SNC theory, we know that, if arrivals and service are $(\sigma, \rho)$-constrained, 
the stochastic delay bound which is violated at most with probability $\varepsilon$ is given by~\cite{6868978}  
\begin{equation}\label{delaybound}
\begin{array}{lll}
	\displaystyle T_\varepsilon=\inf_{\theta\in\{\theta|\rho_A(\theta)<\rho_S(-\theta)\}}\Big\{\frac{\sigma_A(\theta)+\sigma_S(-\theta)}{\rho_S(-\theta)}\\
	\displaystyle \quad\quad\quad\quad\quad -\frac{\log( \varepsilon(1-e^{ \theta(\rho_A(\theta)-\rho_S(-\theta))}))}{\theta\rho_S(-\theta)}\Big\}.
\end{array}
\end{equation}

\subsection{Experimental Setup}

Let us consider the topology shown in  Figure~\ref{fig:topology}.  The  switches support  two queues. Flow 1 enters the highest priority queue. Flow 2 and 3 enter the lowest priority queue. The queuing mode is  by Strict Priority where the priority sets the order in which queues are served, starting with the highest priority queue and going to the next lower queue only after the highest queue has been transmitted.  Suppose that switches in this network are  work-conserving   servers with constant service rate $c$, i.e., $\sigma_S(-\theta)=0$ and $\rho_S(-\theta)=c$.  For any flow of interest,  we can  use Eqn.~(\ref{delaybound}) to compute the end-to-end  performance bounds with $S$ being  substituted  by the end-to-end service $S_{e2e}$. 
  To illustrate the importance of   feature extraction in performance analysis, we take 6 well-known arrival models to benchmark the delay bound, which are: (1) independent normally distributed increments (Normal); (2) independent exponentially distributed increments (Exponential); (3) compounded Poisson (cPoisson); (4) 1-order autoregressive process (AR1); (5) Markov-modulated on-off (mmoo); and (6) Markov-modulated process (mmp). 
  Comparisons are made against   the ground truth  which  is   obtained from the discrete-event simulation (DES)~\cite{Li_ns_py_2022}.

 \begin{figure}[tp]
	\vspace{1em}
	\centering
	\includegraphics[width=0.38\textwidth]{./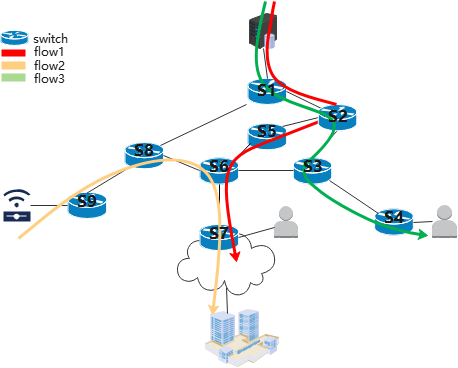}
	\caption{\label{fig:topology} {\bf Network topology of a small-scale commercial campus.}  The flows are of typical backbone network traffic as shown in Table~\ref{tab:comparison}.  }
	\vspace{-1em}
\end{figure}  
\begin{figure}[t]
	\centering 
	\includegraphics[width=0.45\textwidth]{./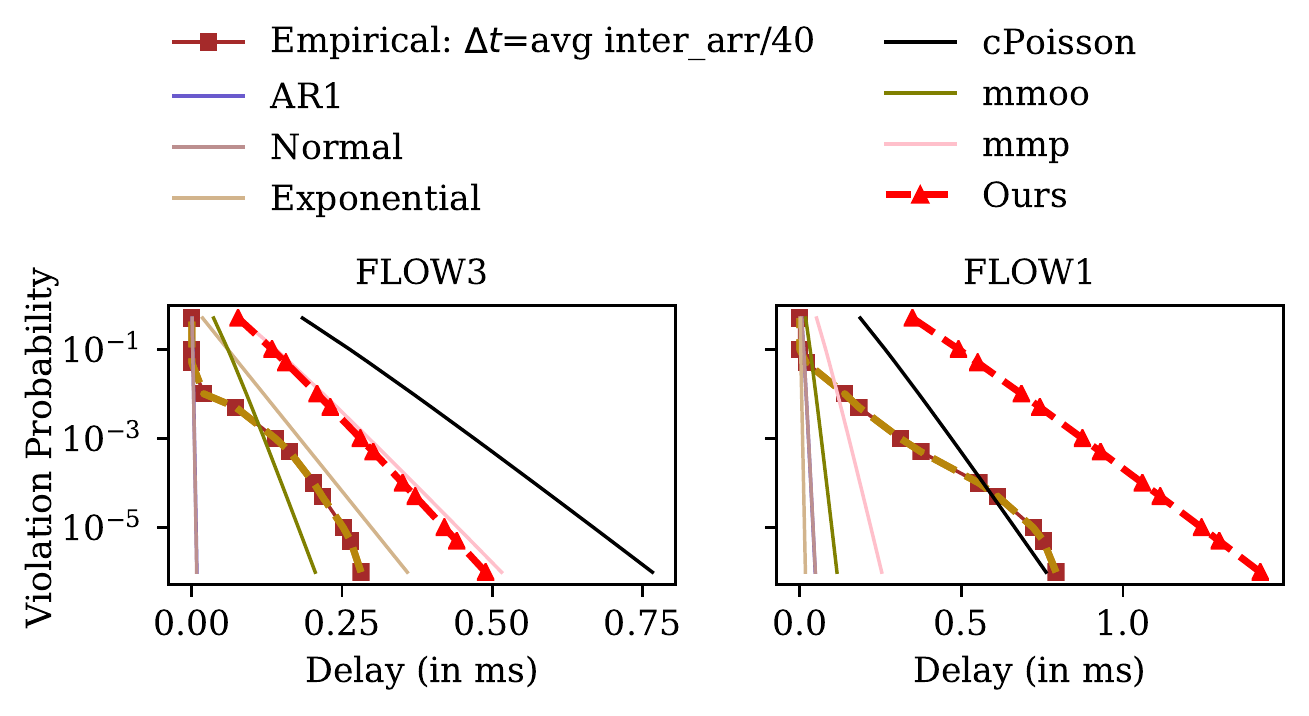}
	\caption{\label{fig:ss_results}  \rm  Performance evaluation  results on the single flow - single server topology with $c=100Mbps$.  }
\end{figure}
\subsection{Single Server Analysis}
  
We begin our analysis with the single flow - single server topology.  This is the most important topology, since we can reduce any feed-forward to this network by using the end-to-end service.  
Another benefit   of  it is that  we can eliminate interference from other sources and focus on the affect of arrivals on performance evaluation.  In Figure~\ref{fig:ss_results}, we test the impact of   arrival models    in feature extraction.     As we can see from this figure,  if we use the Exponential model to extract     traffic features in MGF-SNC,  we will get a good  estimate of the delay bound with flow 3.  But when we move over to flow 1, we cannot even get a reliable estimate. The same argument  applies to the mmp model and the cPoisson model.
  
%\begin{figure} 
%	\setlength{\abovecaptionskip}{-1pt}
%	\centering
%	\includegraphics[width=0.15\textwidth]{./imgs/single_single.png}
%	\vspace{3pt}
%	\caption{ \label{fig:ss_topology} Single flow - single server topology. }
%	%	\vspace{-1em}
%\end{figure}  

\subsection{Network Analysis}
  \begin{figure*}
	\setlength{\abovecaptionskip}{-0.05pt}
	\centering 
	\begin{subfigure}[b]{0.75\linewidth}
		\includegraphics[width=0.9\textwidth]{./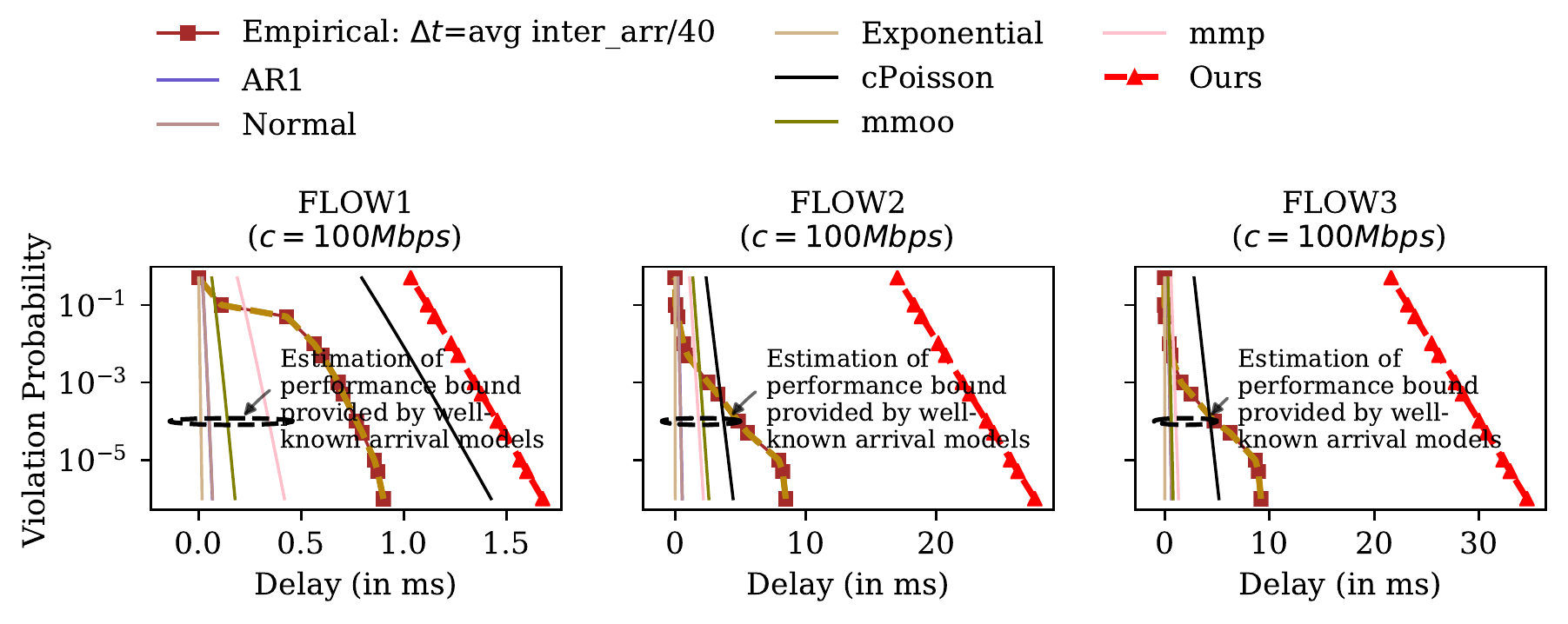}
		\includegraphics[width=0.9\textwidth]{./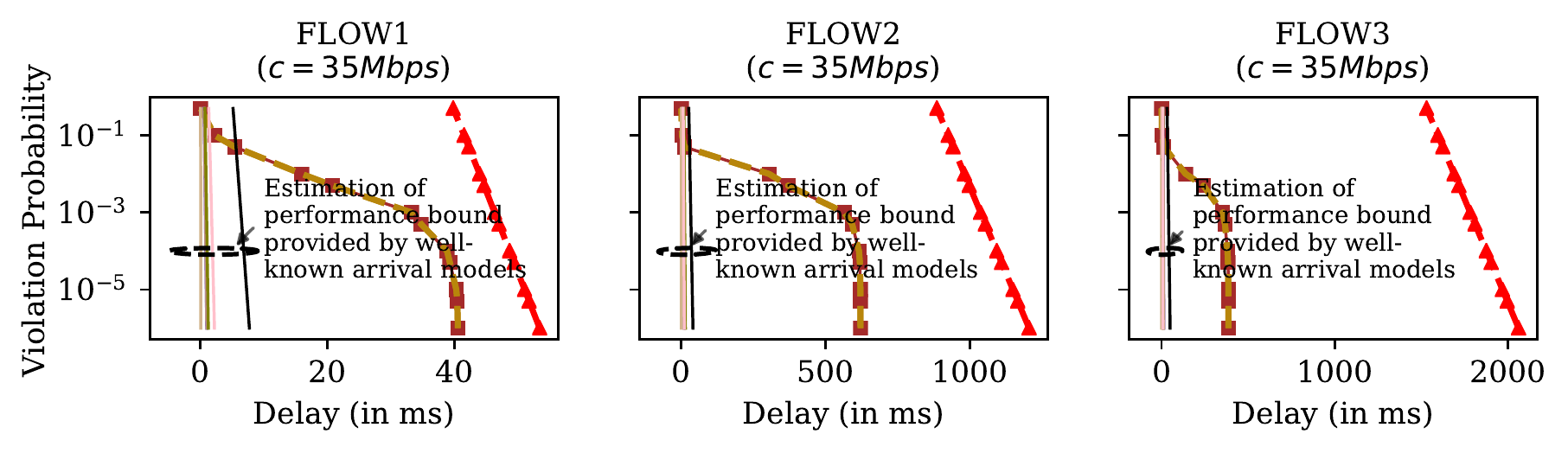}
		\caption{\label{fig:delaybound}}
	\end{subfigure}
	\begin{subfigure}[b]{0.24\linewidth}
		\includegraphics[width=1\textwidth]{./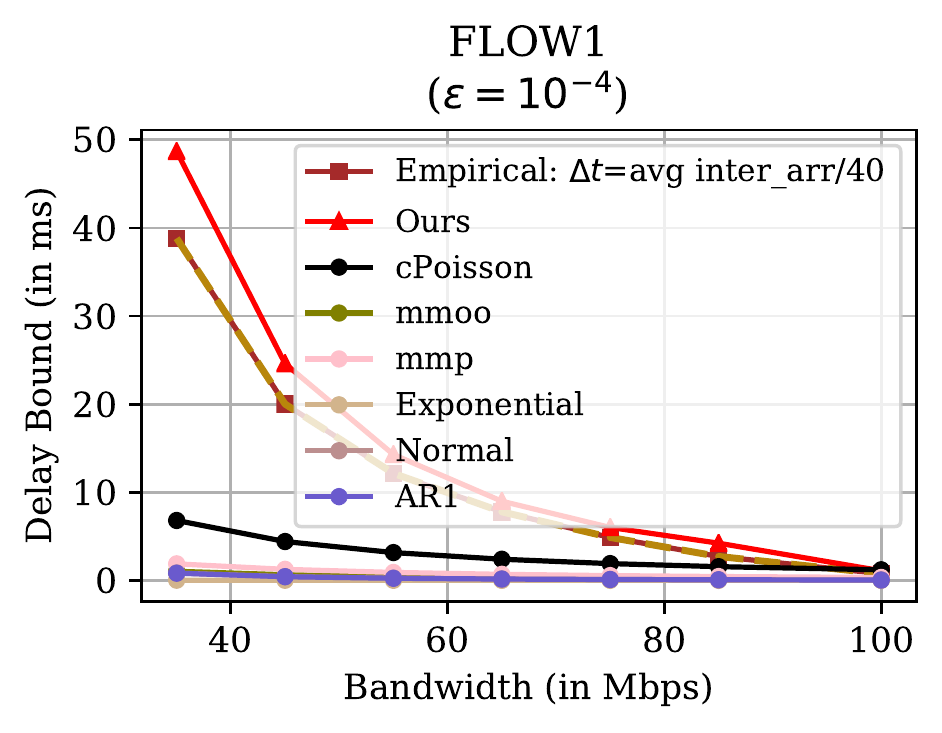}
		\vspace{1em}
		\includegraphics[width=1\textwidth]{./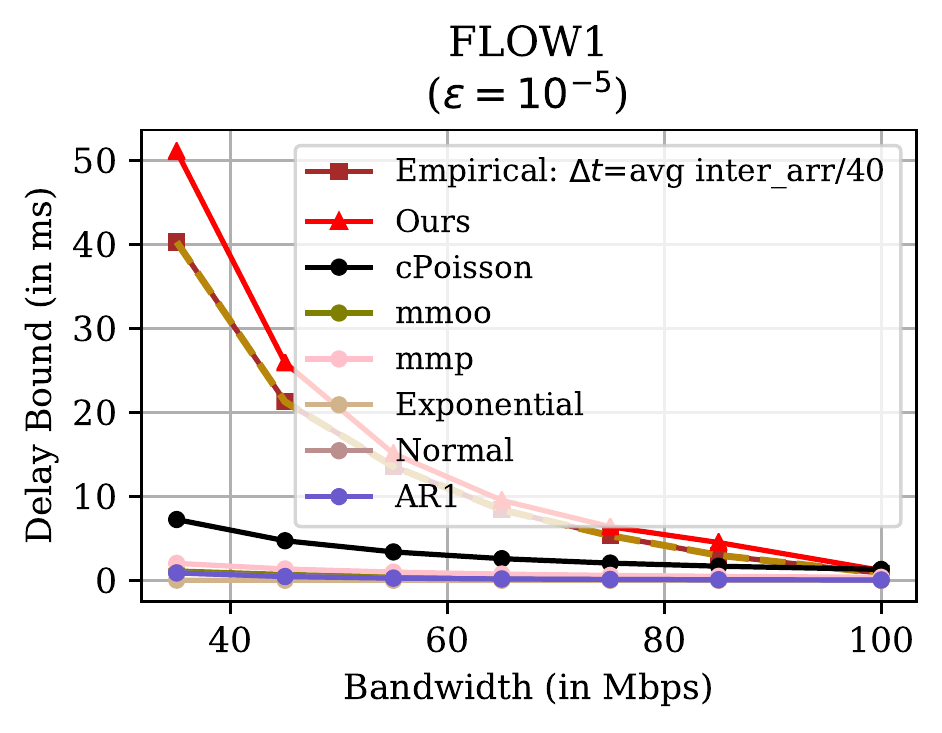}
		\caption{\label{fig:bandwidth}}
	\end{subfigure}
	\vspace{1em}
	\caption{\label{fig:comp_bm_des}(a) \rm  End-to-end performance analysis of the network topology; and (b) \rm  stochastic delay bound  with violation probability $\varepsilon=10^{-4}$ and $\varepsilon=10^{-5}$, respectively.}
\end{figure*}
We continue  with the network topology     and adopt the ``pay multiplexing only  once" (PMOO)~\cite{8264856} algorithm to perform the end-to-end network  analysis.    Let us consider flow 1. The overall service offered to  flow 1 can be described by the end-to-end service 
$$
S_{e2e}(s,t)=S_1\otimes S_2\otimes S_5 \otimes S_6 \otimes S_7  (s,t),
$$
where $\otimes$ is the {\it min-plus} convolution operator. The end-to-end service   is not influenced by the cross-flows, and the only uncertainty to $T_\varepsilon$ is from the arrivals. 
 We test the impact of   traffic models       in   Figure~\ref{fig:delaybound}.      The first line of Figure~\ref{fig:delaybound} displays the result with $c=100Mbps$, and the second line, $c=35Mbps$.   We   see the same phenomenon   here: the reliability is not guaranteed when we choose an inappropriate model to do   feature extraction.   Take the cPoisson model  as  an instance. It can provide accurate results in high-bandwidth scenarios, whereas it behaves dramatically poor as bandwidth decreases.   MGF-SNC with
 the other arrival models might severely  underestimate the delay bound.  By contrast, 
 \sys   can  accurately  capture the various characteristics of traffic  from the decomposed spatio-temporal signals and hence significantly boost the effectiveness of MGF-SNC, which means that both the tightness and the reliability of the delay bound are enhanced (see also Figure~\ref{fig:bandwidth} for the stochastic delay bound with violation probability $\varepsilon=10^{-4}$ and $\varepsilon=10^{-5}$, respectively).

%\begin{figure}[]
% 	\centering
% 	\includegraphics[width=0.3\textwidth]{./imgs/tends.pdf}
% 	\caption{\label{fig:bandwidth}{ \rm  Stochastic delay bound for the IPv4 flow with violation probability $\varepsilon=10^{-4}$. }}
%\end{figure}  

Different  from  flow 1, flow 2 and 3's end-to-end services are influenced by the cross-flows.  When the  flow  of interest  merges with a cross-flow, its service might be strongly reduced.  Compared to its deterministic counterpart,   
the impact of cross-traffic   in   SNC  can be stronger,  as computation   operations like leftover service or deconvolution require several stochastic inequalities, which   in many cases leads to loose bounds.  Since our contribution is not on the service modeling,  we will not pay a lot of discussion on how to reduce pessimism brought by SNC in complex topologies and leave
this to future work.  What we want to address here is that even if more advanced modeling of cross-flow is employed,    MGF-SNC with existing arrival models  might  still underestimate the delay bound (see the last two columns of  Figure~\ref{fig:delaybound}).   The proposed \sys model outperforms popular arrival models   and shows robustness under different scenarios.  The main credit for that goes to the versatility of \sys which contributes to better extracting critical traffic features from  network flows. However, the other arrival models may not work well. As network utilization level grows, complicated characteristics of network traffic  gradually play a more influential role. Existing models are too short-sighted to capture such characteristics  of real traffic and hence cannot provide an appropriate estimation for performance bound. The proposed \sys accurately captures such characteristics, and hence improves performance analysis.

\section{Conclusion}\label{sec:conclucion}
%  The arrival model is critical to comprehending and analyzing   network performance. When the selected model fails to provide trustworthy information, it will invalidate the decisions based on that model. 
  In this paper, we revisit the traffic modeling problem in SNC and propose a spatial-temporal model \sys for real-world network traffic. We dissect various existing arrival models and reveal that \sys unifies these models. Then we derive the MGF-bound of \sys which can be directly integrated into the framework of MGF-SNC. Extensive experiments with real traffic traces demonstrate that \sys can accurately represent multi-dimensional multi-order characteristics of current network traffic. Experiments also show that MGF-SNC with \sys achieves a tighter and more robust performance bound, while the existing arrival models are either  overly optimistic  for accurate performance evaluation or sensitive to specific traffic types and scenarios. In a nutshell, \sys provides a fine-grained QoS guarantee and facilitates effective network planning. We envision that   \sys   will motivate further exploration of traffic modeling for improving SNC.

\bibliographystyle{ieeetran}
\bibliography{reference}

% Generated by IEEEtran.bst, version: 1.14 (2015/08/26)
\begin{thebibliography}{10}
\providecommand{\url}[1]{#1}
\csname url@samestyle\endcsname
\providecommand{\newblock}{\relax}
\providecommand{\bibinfo}[2]{#2}
\providecommand{\BIBentrySTDinterwordspacing}{\spaceskip=0pt\relax}
\providecommand{\BIBentryALTinterwordstretchfactor}{4}
\providecommand{\BIBentryALTinterwordspacing}{\spaceskip=\fontdimen2\font plus
\BIBentryALTinterwordstretchfactor\fontdimen3\font minus
  \fontdimen4\font\relax}
\providecommand{\BIBforeignlanguage}[2]{{%
\expandafter\ifx\csname l@#1\endcsname\relax
\typeout{** WARNING: IEEEtran.bst: No hyphenation pattern has been}%
\typeout{** loaded for the language `#1'. Using the pattern for}%
\typeout{** the default language instead.}%
\else
\language=\csname l@#1\endcsname
\fi
#2}}
\providecommand{\BIBdecl}{\relax}
\BIBdecl

\bibitem{61109}
R.~Cruz, ``A calculus for network delay. {I}. network elements in isolation,''
  \emph{IEEE Transactions on Information Theory}, vol.~37, no.~1, pp. 114--131,
  1991.

\bibitem{61110}
------, ``A calculus for network delay. {II}. network analysis,'' \emph{IEEE
  Transactions on Information Theory}, vol.~37, no.~1, pp. 132--141, 1991.

\bibitem{1134304}
C.~S. Chang, R.~L. Cruz, J.~Y. Le~Boudec, and P.~Thiran, ``A min, + system
  theory for constrained traffic regulation and dynamic service guarantees,''
  \emph{IEEE/ACM Transactions on Networking}, vol.~10, no.~6, pp. 805--817,
  2002.

\bibitem{Chang00}
C.~S. Chang, ``Performance guarantees in communication networks,''
  \emph{Springer Science \& Business Media}, 2000.

\bibitem{1638528}
F.~Ciucu, A.~Burchard, and J.~Liebeherr, ``Scaling properties of statistical
  end-to-end bounds in the network calculus,'' \emph{IEEE Transactions on
  Information Theory}, vol.~52, no.~6, pp. 2300--2312, 2006.

\bibitem{Fidler06}
M.~Fidler, ``An end-to-end probabilistic network calculus with moment
  generating functions,'' in \emph{Proceedings of the 14th IEEE International
  Workshop on Quality of Service}, 2006, pp. 261--270.

\bibitem{Jiang08}
Y.~Jiang and Y.~Liu, ``Stochastic network calculus,'' \emph{Springer}, vol.~1,
  2008.

\bibitem{2342426}
F.~Ciucu and J.~Schmitt, ``Perspectives on network calculus: No free lunch, but
  still good value,'' in \emph{Proceedings of the ACM SIGCOMM Conference on
  Applications, Technologies, Architectures, and Protocols for Computer
  Communication}.\hskip 1em plus 0.5em minus 0.4em\relax Association for
  Computing Machinery, 2012, p. 311–322.

\bibitem{POLOCZEK201456}
F.~Poloczek and F.~Ciucu, ``Scheduling analysis with martingales,''
  \emph{Performance Evaluation}, vol.~79, pp. 56--72, 2014.

\bibitem{NIKOLAUS2019188}
P.~Nikolaus, J.~Schmitt, and M.~Schuetze, ``h-{Mitigators}: Improving your
  stochastic network calculus output bounds,'' \emph{Computer Communications},
  vol. 144, pp. 188--197, 2019.

\bibitem{3388848}
P.~Nikolaus and J.~Schmitt, ``Improving delay bounds in the stochastic network
  calculus by using less stochastic inequalities,'' in \emph{Proceedings of the
  13th EAI International Conference on Performance Evaluation Methodologies and
  Tools}.\hskip 1em plus 0.5em minus 0.4em\relax Association for Computing
  Machinery, 2020, p. 96–103.

\bibitem{8264856}
------, ``On per-flow delay bounds in tandem queues under (in)dependent
  arrivals,'' in \emph{Proceedings of IFIP Networking Conference (IFIP
  Networking) and Workshops}, 2017, pp. 1--9.

\bibitem{205464}
W.~Leland, W.~Willinger, M.~Taqqu, and D.~Wilson, ``On the self-similar nature
  of ethernet traffic,'' \emph{ACM SIGCOMM Computer Communication Review},
  vol.~25, no.~1, p. 202–213, 1995.

\bibitem{1354569}
T.~Karagiannis, M.~Molle, M.~Faloutsos, and A.~Broido, ``A nonstationary
  {Poisson} view of {Internet} traffic,'' in \emph{Proceedings of IEEE
  INFOCOM}, vol.~3, 2004, pp. 1558--1569.

\bibitem{LI20082584}
M.~Li and S.~Lim, ``Modeling network traffic using generalized {Cauchy}
  process,'' \emph{Physica A: Statistical Mechanics and its Applications}, vol.
  387, no.~11, pp. 2584--2594, 2008.

\bibitem{LI2020123982}
M.~Li, ``Multi-fractional generalized {Cauchy} process and its application to
  teletraffic,'' \emph{Physica A: Statistical Mechanics and its Applications},
  vol. 550, p. 123982, 2020.

\bibitem{9500818}
X.~Peng, F.~Zhang, L.~Chen, and G.~Zhang, ``A {MAP}-based performance analysis
  on {5G}-powered cloud {VR} streaming,'' in \emph{Proceedings of IEEE
  International Conference on Communications}, 2021, pp. 1--6.

\bibitem{1064240}
H.~Jiang and C.~Dovrolis, ``Why is the {Internet} traffic bursty in short time
  scales?'' \emph{SIGMETRICS Perform. Eval. Rev.}, vol.~33, no.~1, p.
  241–252, jun 2005.

\bibitem{1455418}
F.~Tobagi, M.~Gerla, R.~Peebles, and E.~Manning, ``Modeling and measurement
  techniques in packet communication networks,'' in \emph{Proceedings of the
  IEEE}, vol.~66, 1978, pp. 1423--1447.

\bibitem{Li09}
M.~Li and S.~Chen, ``Fractional {Gaussian} noise and network traffic
  modeling,'' in \emph{Proceedings of the 8th {WSEAS} International Conference
  on Applied Computer and Applied Computational Science}, vol.~1.\hskip 1em
  plus 0.5em minus 0.4em\relax World Scientific and Engineering Academy and
  Society, 2009, p. 34–39.

\bibitem{380206}
J.~Beran, R.~Sherman, M.~Taqqu, and W.~Willinger, ``Long-range dependence in
  variable-bit-rate video traffic,'' \emph{IEEE Transactions on
  Communications}, vol.~43, no. 2/3/4, pp. 1566--1579, 1995.

\bibitem{392383}
V.~Paxson and S.~Floyd, ``Wide area traffic: the failure of poisson modeling,''
  \emph{IEEE/ACM Transactions on Networking}, vol.~3, no.~3, pp. 226--244,
  1995.

\bibitem{Li21}
M.~Li, ``Modified multifractional {Gaussian} noise and its application,''
  \emph{Physica Scripta}, vol.~96, no.~12, p. 125002, 2021.

\bibitem{LI2021126138}
------, ``Generalized fractional {Gaussian} noise and its application to
  traffic modeling,'' \emph{Physica A: Statistical Mechanics and its
  Applications}, vol. 579, p. 126138, 2021.

\bibitem{6145483}
J.~Liebeherr, A.~Burchard, and F.~Ciucu, ``Delay bounds in communication
  networks with heavy-tailed and self-similar traffic,'' \emph{IEEE
  Transactions on Information Theory}, vol.~58, no.~2, pp. 1010--1024, 2012.

\bibitem{6868978}
M.~Fidler and A.~Rizk, ``A guide to the stochastic network calculus,''
  \emph{IEEE Communications Surveys Tutorials}, vol.~17, no.~1, pp. 92--105,
  2015.

\bibitem{10.1007}
G.~Horv{\'a}th and H.~Okamura, ``A fast {EM} algorithm for fitting marked
  {Markovian} arrival processes with a new special structure,'' in
  \emph{Proceedings of Computer Performance Engineering: 10th European
  Workshop}.\hskip 1em plus 0.5em minus 0.4em\relax Springer Berlin Heidelberg,
  2013, pp. 119--133.

\bibitem{CASALE201061}
G.~Casale, E.~Zhang, and E.~Smirni, ``Trace data characterization and fitting
  for markov modeling,'' \emph{Performance Evaluation}, vol.~67, no.~2, pp.
  61--79, 2010.

\bibitem{KLEMM2003149}
A.~Klemm, C.~Lindemann, and M.~Lohmann, ``Modeling ip traffic using the batch
  markovian arrival process,'' \emph{Performance Evaluation}, vol.~54, no.~2,
  pp. 149--173, 2003.

\bibitem{hamilton1989new}
J.~D. Hamilton, ``A new approach to the economic analysis of nonstationary time
  series and the business cycle,'' \emph{Econometrica: Journal of the
  econometric society}, pp. 357--384, 1989.

\bibitem{1100705}
H.~Akaike, ``A new look at the statistical model identification,'' \emph{IEEE
  Transactions on Automatic Control}, vol.~19, no.~6, pp. 716--723, 1974.

\bibitem{10.2307/2958889}
G.~Schwarz, ``Estimating the dimension of a model,'' \emph{The Annals of
  Statistics}, vol.~6, no.~2, pp. 461--464, 1978.

\bibitem{7445144}
A.~Tenyakov, R.~Mamon, and M.~Davison, ``Filtering of a discrete-time
  hmm-driven multivariate ornstein-uhlenbeck model with application to
  forecasting market liquidity regimes,'' \emph{IEEE Journal of Selected Topics
  in Signal Processing}, vol.~10, no.~6, pp. 994--1005, 2016.

\bibitem{ZHU2017223}
D.~Zhu, J.~Lu, W.~Ching, and T.~Siu, ``Discrete-time optimal asset allocation
  under higher-order hidden {Markov} model,'' \emph{Economic Modelling},
  vol.~66, pp. 223--232, 2017.

\bibitem{Chang2000}
C.~S. Chang, \emph{Performance guarantees in communication networks}.\hskip 1em
  plus 0.5em minus 0.4em\relax Springer Science \& Business Media, 2000.

\bibitem{Li_ns_py_2022}
\BIBentryALTinterwordspacing
B.~Li, L.~Chen, and X.~Peng, ``{ns.py},'' June 2022. [Online]. Available:
  \url{https://github.com/TL-System/ns.py}
\BIBentrySTDinterwordspacing

\end{thebibliography}
\section*{Appendix}
\subsection*{Proof of Theorem~\ref{cmgf}}
The proof uses a backwards mathematical induction that starts with $s=t-1$. Let $\mathbb{E}_s[\cdot]=\mathbb{E}[\cdot|Z_s]$.  When $s=t-1$, \\ 
$$
\small{ 
\begin{array}{ll}
 \quad\mathbb{E}_{t-1}[e^{\theta A(t-1,t)}]=\mathbb{E}_{t-1}[e^{\theta a_t}]\\
	=\mathbb{E}_{t-1}[e^{\theta[\breve\mu(Z_{t-1}, Z_t)+\sum_{i=1}^p\breve\phi_i(Z_{t-1},Z_t)y_{N(t-1)+1-i}+\breve\sigma(Z_{t-1}, Z_t)\epsilon_t]}]\\
	=\mathbb{E}_{t-1}[e^{\theta\sum_{i=1}^p\breve\phi_i(Z_{t-1}, Z_t)y_{N(t-1)+1                                                                                                                                                                                                                                                                                                           -i}}M(\theta;  Z_{t-1}, Z_t)].
\end{array}
}
$$ 
Since $\theta_1(t,t)=\theta$  and 
$
\theta_i(t-1, t)-\theta=\theta[\varphi_i(t-1,t)-1]=\theta\breve\phi_i(Z_{t-1}, Z_t),
$
Theorem~\ref{cmgf} holds for the case of $s=t-1$.

Assume that Theorem~\ref{cmgf} holds for $A(s,t)$. Then we have 
$$
\begin{array}{ll}
	\quad \mathbb{E}_{s-1}[e^{\theta A(s-1, t)}]\\
	= \mathbb{E}_{s-1}[e^{\theta a_s+\theta A(s, t)}]\\
	=  \mathbb{E}_{s-1}[e^{\theta a_s}\mathbb{E}_s[e^{\theta A(s,t)}]]\\
	=    \mathbb{E}_{s-1}[e^{\theta a_s+\sum_{i=1}^p[\theta_i(s,t)-\theta]y_{N(s)+1-i}}\\
	  \;\;\;\;\;\;\;\;\times\prod_{l=s+1}^tM(\theta_1(l,t); Z_{l-1}, Z_l)]\\
	= \mathbb{E}_{s-1}[e^{[\theta_1(s,t)-\theta(1-O_s)]y_{N(s)}+\sum_{i=2}^p[\theta_i(s,t)-\theta]y_{N(s)+1-i}}\\
	 \;\;\;\;\;\;\;\;\times\prod_{l=s+1}^tM(\theta_1(l,t); Z_{l-1}, Z_l)].
\end{array}
$$

   Noticing that:     (1) $
   		y_{N(s)}=(1-O_s)y_{N(s-1)}+\breve\mu(Z_{s-1}, Z_s)+\sum_{i=1}^p\breve\phi_i(Z_{s-1}, Z_s)y_{N(s-1)+1-i}+\breve\sigma(Z_{s-1}, Z_s)\epsilon_s$; and (2)  
     $y_{N(s)+1-i}=O_sy_{N(s-1)+2-i}+(1-O_s)y_{N(s-1)+1-i}$.
Direct  substitution   yields 
$$
\small{
\begin{array}{ll}
\quad\mathbb{E}_{s-1}[e^{\theta A(s-1,t)}]\\
=\mathbb{E}_{s-1}\big[e^{\sum_{i=1}^p[\theta_i(s-1,t)-\theta]y_{N(s-1)+1-i}}  \prod_{l=s}^t M(\theta_1(l,t); Z_{l-1}, Z_l)\big]
\end{array}
}
$$
which completes the proof.

$$\quad$$
\subsection*{Proof of Theorem~\ref{clm}} 
 Let $\{\mathcal{F}_t\}_t$ be   the complete filtration generated
 by $Z$. We first have      
 $ 
 W_s=b(Z_{s+1})+A(Z_s, Z_{s+1})W_{s+1},
$
where $W_s$ is a random variable adapted to $\mathcal{F}_{s-1}$.
 Taking the expectation with $\mathcal{F}_s$ on both sides yields the following equation for $\bm{v}$: 
  $
  \bm{v}=T B+ T \circ A\bm{v}. 
  $

As to the second claim, we first have       
   \begin{claim}\label{clmproof} Let $||\cdot||_\infty$ be the infinity norm, and 
   	$$
   	\breve{\phi}(i,j)=\begin{bmatrix}
   		\breve{\phi}_1(i,j)\\
   		\vdots\\
   		\breve{\phi}_p(i,j)
   	\end{bmatrix}.
   $$ 
   We then have 
   	$|\varphi(s,t)-v_{Z_s}| \le c_{s,t} \varepsilon$, where 
   	\begin{itemize}
   		\item $c_{t-1,t}=\bm{1}$;
   		\item $c_{s,t} =\bm{1}+ abs(A(Z_s, Z_{s+1}))    c_{s+1,t}, \; \forall s<t-1$;  and   
   		\item $\varepsilon=\max\Big(
   		\begin{array}{l}
   			\max_{i,j}||\bm{1}+ \breve{\phi}(i,j)  -v_i||_\infty, \\
   			\max_{i,j}||b(j)+A(i,j)v_j-v_i||_\infty
   		\end{array}
   		\Big)$.
   	\end{itemize}
   \end{claim}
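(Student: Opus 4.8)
The plan is a backwards induction on $s$, running from $s=t-1$ downward, driven entirely by the recursion already recorded in Eqn.~\eqref{recursive}: in compact form it reads $\varphi(s,t)=b(Z_{s+1})+A(Z_s,Z_{s+1})\varphi(s+1,t)$, with terminal value $\varphi(t,t)=\bm 1$ (since $\varphi_i(t,t)=1$ for every $i$). Thus the error $\varphi(s,t)-v_{Z_s}$ that must be controlled is propagated backward by the very same linear maps $A(Z_s,Z_{s+1})$, and each backward step injects only a fresh error stemming from the fact that $\bm v$ satisfies $\bm v=TB+(T\circ A)\bm v$ merely on $T$-average, not transition by transition; the scalar $\varepsilon$ in the statement is precisely a uniform bound on the size of those injected errors.

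\textbf{Base case.} First I would evaluate $\varphi(t-1,t)=b(Z_t)+A(Z_{t-1},Z_t)\bm 1$ explicitly. Summing the rows of $A(i,j)$ and adding $b(j)$ gives the identity $b(j)+A(i,j)\bm 1=\bm 1+\breve\phi(i,j)$ (the entries of each row outside the first column sum to $1-O_j+O_j=1$, with the last row receiving its missing $O_j$ from $b(j)$, while the first column supplies the $\breve\phi$'s). Hence $\varphi(t-1,t)-v_{Z_{t-1}}=\bm 1+\breve\phi(Z_{t-1},Z_t)-v_{Z_{t-1}}$, every entry of which is at most $\|\bm 1+\breve\phi(Z_{t-1},Z_t)-v_{Z_{t-1}}\|_\infty\le\varepsilon$ in absolute value, i.e.\ $|\varphi(t-1,t)-v_{Z_{t-1}}|\le\varepsilon\bm 1=c_{t-1,t}\varepsilon$ component-wise.

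\textbf{Inductive step.} Assuming $|\varphi(s+1,t)-v_{Z_{s+1}}|\le c_{s+1,t}\varepsilon$, I would split
\begin{align*}
\varphi(s,t)-v_{Z_s}=\bigl[b(Z_{s+1})+A(Z_s,Z_{s+1})v_{Z_{s+1}}-v_{Z_s}\bigr]+A(Z_s,Z_{s+1})\bigl[\varphi(s+1,t)-v_{Z_{s+1}}\bigr].
\end{align*}
The first bracket has $\ell_\infty$-norm at most $\max_{i,j}\|b(j)+A(i,j)v_j-v_i\|_\infty\le\varepsilon$, so it is bounded component-wise by $\varepsilon\bm 1$; the second, using the component-wise inequality $|Mx|\le\mathrm{abs}(M)\,|x|$, the nonnegativity of $\mathrm{abs}(M)$, and the induction hypothesis, is bounded component-wise by $\mathrm{abs}(A(Z_s,Z_{s+1}))\,c_{s+1,t}\,\varepsilon$. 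Adding the two bounds yields $|\varphi(s,t)-v_{Z_s}|\le\bigl(\bm 1+\mathrm{abs}(A(Z_s,Z_{s+1}))\,c_{s+1,t}\bigr)\varepsilon=c_{s,t}\varepsilon$, which is the claim at level $s$ and closes the induction.

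There is no real obstacle here; the argument is elementary once two routine points are treated with care: the base-case identity $b(j)+A(i,j)\bm 1=\bm 1+\breve\phi(i,j)$, which relies on the exact first-column-plus-bidiagonal shape of $A(\cdot,\cdot)$, and the bookkeeping between the component-wise inequalities the claim asserts and the $\ell_\infty$ estimates that define $\varepsilon$. The eventual payoff is that Claim~\ref{clmproof}, combined with the observation that both terms defining $\varepsilon$ tend to $0$ as $\phi_{\max}\to 0$ (in that limit $\breve\phi(i,j)\to 0$ and correspondingly $v_i\to\bm 1$, so every per-transition residual vanishes), delivers the $\phi_{\max}\to 0$ asymptotics stated in the second part of Theorem~\ref{clm}.
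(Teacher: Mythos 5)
Your proof is correct and follows essentially the same route as the paper's: a backwards induction from $s=t-1$ using the recursion $\varphi(s,t)=b(Z_{s+1})+A(Z_s,Z_{s+1})\varphi(s+1,t)$, splitting the error into the per-transition residual (bounded by $\varepsilon\bm 1$) plus $A(Z_s,Z_{s+1})$ applied to the previous error (bounded via $\mathrm{abs}(A)c_{s+1,t}\varepsilon$). Your explicit verification of the base-case identity $b(j)+A(i,j)\bm 1=\bm 1+\breve\phi(i,j)$ is a detail the paper states without computation, but the argument is the same.
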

 	\textit{Proof of  Claim~\ref{clmproof}.} The proof uses a backwards mathematical induction that starts with $s=t-1$. When $s=t-1$, 
 	$$
  \varphi(t-1, t)  
 	 = \bm{1}+\breve{\phi}(Z_{t-1}, Z_t) 
 	$$ 
 	By definition, we have $  |\varphi(t-1, t)-v_{Z_{t-1}}| \le \varepsilon \bm{1}$.  
 	
 	Assume that claim~\ref{clmproof} holds for $\varphi(s+1, t)$.  
 	We then have
 	$$
 	\begin{array}{lll}
 		&& |\varphi(s, t)-v_{Z_s}| \\
 		&=& |b(Z_{s+1})+A(Z_s, Z_{s+1})\varphi(s+1, t)-v_{Z_s}|  \\
 		&\le &|b(Z_{s+1})+A(Z_s, Z_{s+1})v_{Z_{s+1}}-v_{Z_s}|\\
 		&&+abs(A(Z_s, Z_{s+1}))  c_{s+1,t}\varepsilon\\
 		&\le &[\bm{1}+abs(A(Z_s, Z_{s+1}))c_{s+1, t}]\varepsilon\\
 		&\le&c_{s,t} \varepsilon, 
 	\end{array}
 	$$
 	which completes the proof.  \\

Let $\epsilon_t\sim \mathcal{N}(0,1)$\footnote{The case of $\epsilon_t\sim{\rm Exp}(1)$ can be similarly  argued.  },  according to the result of Claim~\ref{clmproof}, we   have 
\begin{enumerate}
	\item 
$$
\begin{array}{ll}
	\quad e^{\sum_{i=1}^p[\theta_i(s,t)-\theta]y_{N(s)+1-i}}\\
 \le e^{\sum_{i=1}^p\theta[v_{Z_s,i}-1+c^{max}_{s,t}\varepsilon ]y_{N(s)+1-i}},
\end{array}
$$ 
where $c^{max}_{s,t}=\sup||c_{s,t}||_\infty$;   
\item 
$$
 \begin{array}{lll}
 	\quad  \mathbb{E}[\prod_{l=s+1}^tM(\theta_1(l,t); Z_{l-1}, Z_{l})|Z_s=k]\\
 = \mathbb{E}[\prod_{l=s+1}^t e^{a_l\varphi_1(l, t)+b_l\varphi_1(l, t)^2}|Z_s=k]\\
    \le   \mathbb{E}[\prod_{l=s+1}^t e^{a_l(v_{Z_l,1}+c_{l,t,1}\varepsilon)+b_l(v_{Z_l,1}+c_{l,t,1}\varepsilon)^2}|Z_s=k]\\
    \le  e^{(h_1\varepsilon+h_2\varepsilon^2)(t-s-1)}\times\\
     \quad \underbrace{\mathbb{E}[\prod_{l=s+1}^te^{a_lv_{Z_l,1}+b_lv_{Z_l,1}^2}|Z_s=k]}_{\Lambda_{s, t}},
 \end{array}
 $$
 where 
 $$
 \left\{  
 \begin{array}{lll}
 	a_i= \breve\mu(Z_{i-1}, Z_i)\theta\\
 	b_i=\frac{1}{2}\breve\sigma(Z_{i-1}, Z_i)^2\theta^2\\
 	h_1=\max_i  (a_i+2b_iv_{Z_i,1})c_{i,t,1}\\
 	h_2=\max_i  b_ic_{i,t,1}^2
 \end{array}
 \right.
 $$
  \end{enumerate} 
Consider the building block $\Lambda_{s, t}$. 
 Let $\bm{J}(s,t)$ be the vector $[\mathbb{E}[\Lambda_{s,t}|Z_s=k]]_{k\in\mathcal{S}}$, 
 we then have
  $$
  \begin{array}{lll}
  	J_k(s,t)&=&\mathbb{E}[\Lambda_{s,t}|Z_s=k]\\
  	&=&\mathbb{E}[e^{a_{s+1}v_{Z_{s+1},1}  +b_{s+1}v_{Z_{s+1},1}^2}\Lambda_{s+1, t}|Z_s=k]\\
  &=&(\Gamma(\theta)J(s+1,t))_k.
\end{array}
$$
And then,    
$$
\bm{J}(s,t)=\Gamma(\theta)J(s+1, t)=\cdots=\Gamma^{t-s}(\theta)\bm{e}.
$$
Therefore,     
 $$
 \begin{array}{lll}
&& \mathbb{E}[e^{\theta A(s,t)}]\\
&\le&\bm{\pi}[\mathbb{E}[e^{\theta A(s,t)}|Z_s=k]]_{k\in\mathcal{S}}\\  &\le&e^{[h_1\varepsilon+h_2\varepsilon^2](t-s-1)+\sum_{i=1}^p\theta c^{max}_{s,t}  y_{N(s)+1-i} \varepsilon}\bm{\pi}{\rm R}J(s,t).
 \end{array}
 $$
 
Note that   when  $\phi_{\max}=0$,     $v_i\equiv \bm{e},   \forall i\in\mathcal{S}$, we then have     $\varepsilon=0$.   Since $\varepsilon$ is continuous at   $\phi_{\max}=0$, we  then have
 $$
 \begin{array}{ll}
 	   &\lim_{\phi_{\max}\to0} c_{s,t}\varepsilon\\
 	   =& \lim_{\phi_{\max}\to0}c_{s,t}\times \lim_{\phi_{\max}\to0}\varepsilon\\
 	=& (t-s)\times \lim_{\phi_{\max}\to0}\varepsilon\\
 	=&0,
 \end{array}
 $$
which completes the  proof.

\subsection*{Ethics}
 This work does not raise any ethical issues.

\end{document}